\DeclareMathOperator{\Tr}{Tr}
\DeclareMathOperator*{\argmax}{arg\,max} % thin space, limits underneath in displays
\newtheorem{theorem}{Theorem}
\newtheorem{corollary}{Corollary}
\newtheorem{lemma}{Lemma}
\newtheorem{claim}{Claim}
\theoremstyle{definition}
\newtheorem{definition}{Definition}
\definecolor{mycolor}{rgb}{0.122, 0.435, 0.698}
\newmdenv[innerlinewidth=0.5pt, roundcorner=4pt,linecolor=mycolor,innerleftmargin=6pt,
innerrightmargin=6pt,innertopmargin=6pt,innerbottommargin=6pt]{mybox}
\begin{document}

\title{Unifying paradigms of quantum refrigeration:\\A universal and attainable bound on cooling}
\author{Fabien Clivaz}
\affiliation{Department of Applied Physics, University of Geneva, 1211 Geneva 4, Switzerland}
\affiliation{Institute for Quantum Optics and Quantum Information (IQOQI), Austrian Academy of Sciences, Boltzmanngasse 3, A-1090 Vienna, Austria}
\author{Ralph Silva}
\affiliation{Department of Applied Physics, University of Geneva, 1211 Geneva 4, Switzerland}
\affiliation{Institute for Theoretical Physics, ETH Z\"urich, 8093 Z\"urich, Switzerland}
\author{G\'eraldine Haack}
\affiliation{Department of Applied Physics, University of Geneva, 1211 Geneva 4, Switzerland}
\author{Jonatan Bohr Brask}
\affiliation{Department of Applied Physics, University of Geneva, 1211 Geneva 4, Switzerland}
\affiliation{Technical University of Denmark, 2800 Kgs. Lyngby, Denmark}
\author{Nicolas Brunner}
\affiliation{Department of Applied Physics, University of Geneva, 1211 Geneva 4, Switzerland}
\author{Marcus Huber}
\affiliation{Institute for Quantum Optics and Quantum Information (IQOQI), Austrian Academy of Sciences, Boltzmanngasse 3, A-1090 Vienna, Austria}

\begin{abstract}
Cooling quantum systems is arguably one of the most important thermodynamic tasks connected to modern quantum technologies and an interesting question from a foundational perspective. It is thus of no surprise that many different theoretical cooling schemes have been proposed, differing in the assumed control paradigm and complexity, and operating either in a single cycle or in steady state limits. Working out bounds on quantum cooling has since been a highly context dependent task with multiple answers, with no general result that holds independent of assumptions. In this letter we derive a universal bound for cooling quantum systems in the limit of infinite cycles (or steady state regimes) that is valid for any control paradigm and machine size. The bound only depends on a single parameter of the refrigerator and is theoretically attainable in all control paradigms. For qubit targets we prove that this bound is achievable in a single cycle and by autonomous machines.
\end{abstract}

\maketitle 

Characterising the ultimate performance limits of quantum thermal machines is directly related to the understanding of energy exchanges at the quantum scale, and hence to the formulation of thermodynamic laws valid in the quantum regime \cite{review}. A central challenge towards this goal is to identify the relevance of different levels of control that can be achieved over quantum mechanical machines and how this determines the thermodynamic limits in the quantum regime. To analytically derive such fundamental limits, it is instructive to observe that the time evolution of closed quantum systems is unitary in quantum mechanics. Discussing non-trivial changes of energy and entropy on a target system thus requires a conceptual separation of a thermodynamic process into a target system, upon which a desired task is performed, and a `machine', comprised of several other quantum systems, which is used to perform the task upon the target. The total evolution will be a global unitary operation and feature a limited depth/complexity (i.e. machines and the baths constituents they couple to have a finite size) and can either be assumed to be energy changing, if one allows for a \emph{coherent} control of the process, or alternatively energy preserving in an \emph{incoherent} control paradigm. Furthermore, machines can operate in a cyclic manner, such that each of these unitary operations can be repeated an arbitrary number of times. To derive self-contained and non-trivial bounds on thermodynamic performance, the state of the machine for each of the repetitions should be thermal with respect to its Hamiltonian. Additionally, to include heat engines, part of the machine could be thermalised with respect to a higher temperature. All these different paradigms are illustrated in Figure \ref{fig:model}, and include paradigmatic scenarios, such as autonomous quantum thermal machines \cite{virtual, auto1,auto2}, heat bath algorithmic cooling \cite{algo1,algo2,algo3,algo4,algo5,algo6}, and quantum otto engines \cite{otto1,otto2} for finite size processes. Our framework is a special case of the generalised framework exposed in \cite{alvaro} but differs from any of the case studies made there. In the limit of infinite size, the limiting cases are the resource theory of thermodynamics \cite{resource,TO,work,work2} for incoherent operations and for coherent control general completely positive and trace preserving (CPTP) maps, that make cooling to the ground state trivially possible, emphasizing the role of complexity in the third law of thermodynamics \cite{thirdlaw,thirdlaw2}.
	\begin{figure}[t]
		\centering
		\includegraphics[width=7.5cm]{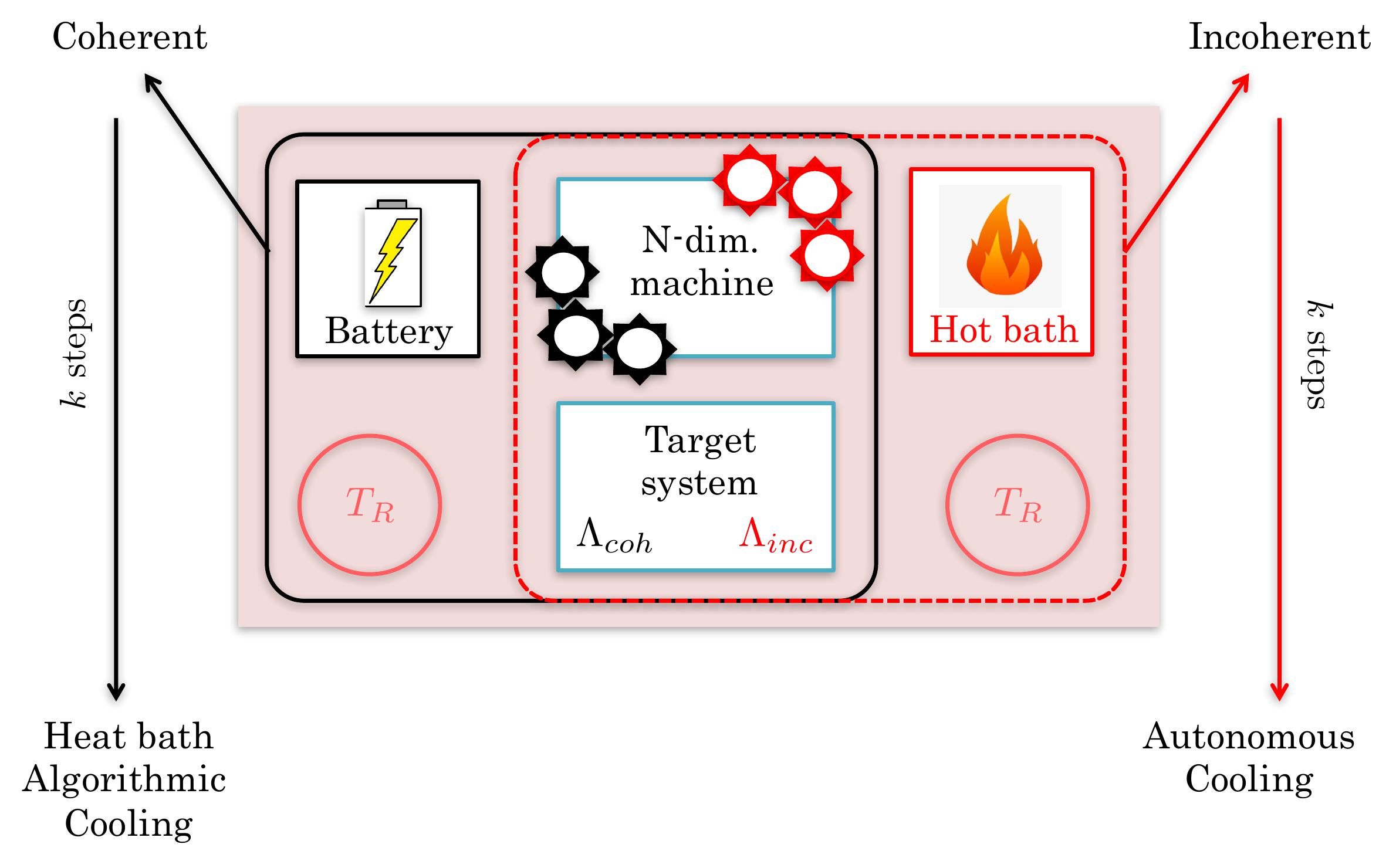}
		\caption{Schematic representation of the paradigms of coherent and incoherent control for quantum refrigeration. For a target qubit, the coldest state of the infinite cycle regime of the incoherent paradigm is the steady state achieved by autonomous cooling. For a target qubit and a product qubit machine, the coherent paradigm is a special instance of heat bath algorithmic cooling with no compression qubit.}\label{fig:model}
	\end{figure}

A task of paramount importance for quantum technologies is refrigeration and, depending on the paradigm, different limitations have been derived for specific machine designs. Indeed, it seems a daunting task to derive bounds beyond specific, low complexity machines, as the potential Hamiltonians, dimensions and number of repetitions present an overwhelming amount of parameters to optimise over. Thermodynamics as a theory, however, has been astoundingly successful in deriving general statements that are independent of the microscopic complexity, by identifying a small number of relevant parameters that ultimately determine the limits of processing energy.

We should first clarify that cooling a quantum system can have several meanings. For a system initially in a thermal state, one can drive it to a thermal state of lower temperature. Alternatively, since for some paradigms fixing the entire spectrum is actually too strong a condition, one could just consider increasing its ground-state population or its purity, or decreasing its entropy or its enerzgy (see e.g. discussions on passivity \cite{passive1,passive2,passive3,passive4}). These notions are in general nonequivalent for target systems of arbitrary dimension and determining the fundamental limits to cooling is therefore dependent on the choice of target function. {However, by using majorisation theory, we are able to derive results that hold for all of the above mentioned notions of cooling.}

\emph{Setting.} We consider a target system of dimension $d_S$ with Hamiltonian $H_S = \sum_{i=0}^{d_S-1} E_i \ket{E_i}_S \bra{E_i}$, where $E_i \leq E_j$ for $i <j$ and a thermodynamic machine of finite size $d_M$ with Hamiltonian $H_M= \sum_{i=0}^{d_M-1} \mathcal{E}_i \ket{\mathcal{E}_i}_M \bra{\mathcal{E}_i}$, where $\mathcal{E}_i \leq \mathcal{E}_j$ for $i<j$, both surrounded by a thermal bath at temperature $T_R$ (or inverse temperature $\beta_R$).  Without loss of generality we assume $E_0=\mathcal{E}_0=0$. We also write $\mathcal{E}_{\text{max}}=\mathcal{E}_{d_M-1}$ and for a qubit target, $d_S=2$, $E_S=E_1$. The joint initial state of target and machine is given by $\rho_{SM}=\tau_S(\beta_R)\otimes\tau_M(\beta_R)$, where $\tau(\beta_R)$ denotes a thermal state at inverse temperature $\beta_R$.
Our goal is to cool the target system. After the application of a single unitary the target state is changed as $\text{Tr}_M[U\rho_{SM}U^\dagger]=:\Lambda(\tau_S(\beta_R))=:\rho_S'$. To induce a non-trivial change of the target system,  out-of-equilibrium resources have to be used, which we conceptually divide into two extremal cases: Either we use external coherent control, to induce any energy changing unitary of target and machine (see e.g. \cite{coh1}), inducing the map $\Lambda_{coh}$ on the system. Or we consider an energetically closed system, i.e. consider unitaries that commute with the total system-machine Hamiltonian $H=H_S+H_M$, inducing $\Lambda_{inc}$ on the target.
This application of a unitary represents a single cycle of the machine. Between cycles the machine is rethermalised, where individual machine components can thermalise either with the environment at temperature $T_R$ or draw further resources from a hot bath at temperature $T_H$. For many cycles we denote the number of cycles $k$ by a superscript on the map $\Lambda^k_{coh/inc}$. Of particular interest to us is the unbounded cycle regime, i.e. as $k \rightarrow \infty$, which we denote by $p_0^*$. The setting is visually represented in Fig. \ref{fig:model} and the asymptotic relations of the induced maps is summarised in Table \ref{table} in the Appendix.\\

\emph{Results.} In the following we consider an arbitrary sequence of operations and prove a bound that holds for any control paradigm and any machine size in the limit of infinite cycles. This bound only depends on the maximal energy gap of the machine, irrespective of the structure of $H_M$. The bound can be attained for both coherent and incoherent machines. Finally, for qubit targets, the bound can even be obtained in a single cycle by the smallest possible coherent machine (i.e. a single qubit machine), as well as via the smallest autonomous refrigerator (i.e. a two-qubit machine).

The energetics of quantum systems only depend on the diagonal elements in the energy eigenbasis, which through unitary evolution change uni-stochastically. This already gives a recipe for single cycle cooling that is analytically optimal in terms of majorization theory: The global diagonal of the joint target and machine density matrix of the final state is always majorized by the initial diagonal (Shur-Horn Lemma). Thus, the optimal unitary is one that puts all the largest eigenvalues in the diagonal entries that contribute to the ground state population of the target, {{red} the next largest in those entries contributing to the first excited state of the target, and so on}. Which unitary that is, depends on all energy gaps and respective eigenvalues for each cycle.

Descriptions of thermodynamic machines, however, should go beyond such optimal single cycles and also allow for a repetition of cycles to determine its ultimate limits. Ideally, each cycle can be perfectly separated and the machine rethermalized in between cycles. In both coherent and incoherent paradigms we thus also consider the limit of infinite repetitions of optimal unitary operations, where the machine can be perfectly rethermalized inbetween. While optimal unitary operations can at least in principle be described for every cycle of arbitrary machine and target systems, they will in general depend on the entire spectrum and does not identify relevant machine parameters for determining performance. Indeed, determining simple and universal bounds on reachable target temperatures and revealing the relevant parameters of machines in all paradigms, even after infinite cycle repetitions, is the main result of this letter. We present four theorems gauging the cooling performances for the different types of machines, prove universal upper bounds on cooling and demonstrate their respective attainability. In the accompanying article \cite{ourPRE}, we furthermore investigate the work cost of achieving these bounds. 

\textit{Universal bound on cooling.}  Unless stated otherwise, we consider an arbitrary machine, i.e. any spectrum and size. It turns out however that no matter the microscopic physics of the machine, only a single machine parameter determines the ultimate cooling bound, namely the largest energy gap $\mathcal{E}_{\text{max}}$.

\begin{theorem}{(Universal bound on cooling)}\label{thm:unibound}
For any machine and control paradigm, in the limit of infinite cycles,
\begin{itemize}
    \item for a qubit target, the ground state population is upper bounded by 
    \begin{align}\label{math:bestgroundpop}
        p_0^* &= \frac{1}{1 + e^{-\beta_R \mathcal{E}_\text{max}}},
    \end{align}
    \item for an arbitrary target of dimension $d_S$, the vector of eigenvalues of the final state is majorized by that of the following state,
    \begin{align}\label{math:beststate}
        \rho^*_S = \frac{1}{\mathcal{Z}} \sum_{n=0}^{d_S-1} \left(e^{- \beta_R \mathcal{E}_{max}}\right)^n \ket{E_n}_S\!\bra{E_n},
    \end{align}
    as long as the initial state is majorized by $\rho^*_S$. In particular, the ground state population, purity, entropy and average energy of the final state are bounded by those of $\rho^*_S$.
\end{itemize}
\end{theorem}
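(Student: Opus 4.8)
\emph{Proof strategy.} My plan is to turn the problem into a single statement about majorization of population vectors and then to exhibit $\rho_S^*$ as an invariant of the cooling map. As already noted, only the diagonal of the joint state in the energy eigenbasis matters energetically, and a global unitary preserves the joint \emph{spectrum}; the partial trace $\rho_S' = \Tr_M[U(\rho_S\otimes\tau_M)U^\dagger]$ therefore has a spectrum constrained by that of $\rho_S\otimes\tau_M$ alone. The first ingredient I would prove is the paradigm-independent partial-trace bound
\begin{align}\label{eq:pt}
\sum_{i=0}^{k-1}\lambda_i^{\downarrow}(\rho_S') \;\le\; \sum_{m=0}^{kd_M-1}\lambda_m^{\downarrow}(\rho_S\otimes\tau_M),
\end{align}
for every $k$ and \emph{every} unitary $U$, where $\lambda^{\downarrow}$ lists eigenvalues in decreasing order. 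This is immediate from $\sum_{i<k}\lambda_i^{\downarrow}(\rho_S') = \max_{\dim V=k}\Tr[(\Pi_V\otimes\id)U(\rho_S\otimes\tau_M)U^\dagger]$ together with the fact that a trace of a state against a rank-$kd_M$ projector cannot exceed the sum of its $kd_M$ largest eigenvalues. Since \eqref{eq:pt} holds for arbitrary unitaries, it covers both the coherent and incoherent paradigms at once, which is the ``any control paradigm'' content of the theorem.

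\emph{Invariance of $\rho_S^*$.} The heart of the argument is to show that one cooling cycle cannot escape the region majorized by $\rho_S^*$. I would combine two facts. First, majorization is stable under tensoring with a fixed vector: if $\rho_S\prec\rho_S^*$ then writing $\vec\lambda(\rho_S) = D\,\vec\lambda(\rho_S^*)$ with $D$ doubly stochastic gives $\vec\lambda(\rho_S\otimes\tau_M) = (D\otimes\id)\,\vec\lambda(\rho_S^*\otimes\tau_M)$, so $\rho_S\otimes\tau_M\prec\rho_S^*\otimes\tau_M$. Second, and crucially, the geometric weights of $\rho_S^*$ make its eigenvalues \emph{dovetail} with those of the machine. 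Writing $q_j$ for the thermal machine populations and $x = q_{d_M-1}/q_0 = e^{-\beta_R\mathcal{E}_{\max}}$, the spectrum of $\rho_S^*\otimes\tau_M$ consists of the $d_M$-element blocks $x^n\{q_0,\dots,q_{d_M-1}\}/\mathcal{Z}$, and since $x q_0 = q_{d_M-1}$ these blocks tile contiguously, giving the exact identity
\begin{align}\label{eq:dove}
\sum_{m=0}^{kd_M-1}\lambda_m^{\downarrow}(\rho_S^*\otimes\tau_M) = \frac{1}{\mathcal{Z}}\sum_{n=0}^{k-1}x^n = \sum_{i=0}^{k-1}\lambda_i^{\downarrow}(\rho_S^*).
\end{align}
Chaining \eqref{eq:pt}, the tensor-majorization step, and \eqref{eq:dove} yields $\sum_{i<k}\lambda_i^{\downarrow}(\rho_S')\le\sum_{i<k}\lambda_i^{\downarrow}(\rho_S^*)$ for all $k$, i.e. $\rho_S'\prec\rho_S^*$.

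\emph{Iteration and specialisation.} Because the machine is rethermalised between cycles, cycle $n+1$ receives $\rho_S^{(n)}\otimes\tau_M$, so the invariance step applies verbatim at each cycle. Starting from the hypothesis $\tau_S(\beta_R)\prec\rho_S^*$ and inducting, every $\rho_S^{(n)}\prec\rho_S^*$, and the bound survives the limit $k\to\infty$. For a qubit target this reads, at the level of the largest partial sum, $(\rho_S')_{00}\le\lambda_0^{\downarrow}(\rho_S')\le\lambda_0^{\downarrow}(\rho_S^*) = 1/(1+e^{-\beta_R\mathcal{E}_{\max}}) = p_0^*$. The statements about purity, entropy and mean energy then follow from $\rho_S'\prec\rho_S^*$ by Schur-(anti)monotonicity, using additionally that $\rho_S^*$ is passive (its weights decrease while the $E_i$ increase) to bound the average energy.

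\emph{Main obstacle.} I expect the decisive step to be the dovetailing identity \eqref{eq:dove}: it is precisely the requirement that the machine-dressed blocks tile without gaps or overlaps that forces the geometric ratio to equal $e^{-\beta_R\mathcal{E}_{\max}}$ and thereby singles out $\mathcal{E}_{\max}$ as the only relevant machine parameter. Verifying that $\rho_S^*$ is the \emph{tightest} such invariant, and handling machines whose components are refreshed from a hot bath (where $\tau_M$ and hence the extremal populations $q_0,q_{d_M-1}$ must be taken from the appropriate product thermal state), are the points demanding the most care.
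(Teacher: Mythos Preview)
Your proof is correct and follows essentially the same route as the paper: the partial-trace (Ky Fan) bound, the tensor-stability of majorization, and the block-ordering (``dovetailing'') identity for $\rho_S^*\otimes\tau_M$ are precisely the three ingredients of Claim~\ref{theorem:coherentmajorization} in the appendix, with the induction over cycles appearing verbatim. On your flagged obstacle about the hot bath: rather than rerunning the dovetailing with the hot-bath populations $q_0,q_{d_M-1}$ (which would only loosen the bound, since the relevant ratio becomes $e^{-\beta_H\mathcal{E}_{\max}}>e^{-\beta_R\mathcal{E}_{\max}}$), the paper simply observes that a thermal state at $T_H>T_R$ is majorized by the one at $T_R$, so by the same tensor-stability step $\rho_S\otimes\tau_M^{\text{mixed}}\prec\rho_S^*\otimes\tau_M(\beta_R)$ and your chain of inequalities closes unchanged with the same $\rho_S^*$.
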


Before we discuss the proof, we would like to make a few remarks. The qubit bound, first derived in \cite{armen}, and also appearing in \cite{NJP2014} corresponds to an inverse temperature of
\begin{equation}\label{math:besttempqubit}
    \beta^* = \frac{ \beta_R \mathcal{E}_{max}}{E_S}\,.
\end{equation}
The crucial parameter is the ratio between the populations of the excited and ground state, $g = e^{-\beta_R \mathcal{E}_\text{max}}$. The bound \eqref{math:beststate} for higher dimensional systems has the same property, that the ratio of populations for every pair of successive energy levels is given by $g$.

We emphasize the advantage of a bound based on majorisation. Since the state \(\rho_S^*\) is the unique \emph{passive} state \cite{footnote1} that majorizes all others also attainable by coherent operations, it upper (lower) bounds every Schur convex (concave) function of the eigenvalues, which includes the various notions of cooling listed in the theorem.

 \begin{proof}[Proof of Theorem \ref{thm:unibound}] Firstly, for machines using coherent operations, using the Schur-Horn Lemma for the joint state of target and machine $\rho_{SM}$, one can verify (see the appendix) that the system after a single cycle operation, \(\rho_S'\), satisfies \(\rho_S' \prec \rho_S^*\) as long as \(\rho_S \prec \rho_S^*\). By induction, one arrives at \(\rho_S^{(n)} \prec \rho_S^*\) for all $n$.
 
 Furthermore, note that a hotter thermal state is always majorized by a colder thermal state and majorization is stable under tensor products, see Corollary 1.2. of Ref~\cite{bondar}. Thus having access to a hot thermal bath to rethermalize our machine before a cycle, or part of it, if the machine has a tensor product structure, will only enable us to reach states that are majorized by the initial state. Thus a hot thermal bath does not allow us to reach a state that cannot already be reached by a coherent resource, completing the proof.
\end{proof}

\textit{Attaining the cooling bound--} The bound defined in Eq.\eqref{math:beststate} only depends on the subspace of the machine with the ``coldest" ratio of populations. Indeed, one can construct an explicit cooling protocol as follows: consider the following simple joint unitary operation between target and machine,
\begin{equation}
\begin{aligned}
    U_i = \mathds{1} &- \ket{E_{i-1} \mathcal{E}_{\text{max}}}\! \bra{E_{i-1} \mathcal{E}_{\text{max}}} - \ket{E_{i} \mathcal{E}_{0}} \! \bra{E_{i} \mathcal{E}_{0}}\\
    &  + \ket{E_{i-1} \mathcal{E}_{\text{max}}} \! \bra{E_{i} \mathcal{E}_{0}} + \ket{E_{i} \mathcal{E}_{0}} \! \bra{E_{i-1} \mathcal{E}_{\text{max}}}.
    \end{aligned}
\end{equation}

Each $U_i$ is a simple qubit swap, between the $i^{th}$ pair of successive energy eigenstates of the target, and the $\mathcal{E}_\text{max}$ subspace of the machine, and will lead to the population of the lower energy state $p_{i-1}$ increasing by some $\Delta_i$ (this may be negative). We define the \textit{``coherent max-swap" operation} as the one that performs the $U_i$ corresponding to the greatest positive value of $\Delta_i$ (if none exists, then no unitary is performed). The choice and implementation of $U_i$ is preceded and followed by a unitary on the target that leaves the target in a passive state.

\begin{theorem}{(Coherent attainability)}\label{thCoattain}
The state of the target under the repeated application of either one of the coherent max-swap operation or the optimal coherent operation converges to $\rho^*_S$.
\end{theorem}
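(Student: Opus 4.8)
The plan is to treat a single application of either operation as a map on the target's sorted (passive) population vector $\vec p = (p_0 \ge \cdots \ge p_{d_S-1})$, and to establish convergence through a Lyapunov function together with a uniqueness-of-fixed-point argument, all carried out inside the compact invariant region $\mathcal{R} = \{\rho_S : \rho_S \prec \rho_S^*\}$ that Theorem~\ref{thm:unibound} guarantees is preserved. Writing $g = e^{-\beta_R \mathcal{E}_{max}}$ and letting $q_0, q_{max}$ denote the rethermalised machine's ground and top populations (so $q_{max}/q_0 = g$), the swap $U_i$ shifts the lower level by $\Delta_i = p_i q_0 - p_{i-1} q_{max}$, i.e. $\Delta_i > 0 \iff p_i/p_{i-1} > g$. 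First I would record that $\rho_S^*$, whose populations obey $p_n^* \propto g^n$, makes every $\Delta_i = 0$ and is therefore fixed by the max-swap; the same state is fixed for the optimal operation because a single optimal cycle cannot leave $\mathcal{R}$ (Theorem~\ref{thm:unibound}) yet its output $\sigma$ majorizes the max-swap output, which from $\rho_S^*$ is $\rho_S^*$ itself, forcing $\rho_S^* \prec \sigma \prec \rho_S^*$ and hence $\sigma = \rho_S^*$.

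The core of the argument is a monotone. I would take the purity $f(\rho_S) = \Tr[\rho_S^2] = \sum_i p_i^2$, which is strictly Schur-convex and therefore bounded above on $\mathcal{R}$ by $\Tr[(\rho_S^*)^2]$. A max-swap with $\Delta_i > 0$ transports mass from the smaller coordinate $p_i$ to the larger $p_{i-1}$, an anti-Robin-Hood transfer that raises purity by exactly $2\Delta_i(p_{i-1} - p_i) + 2\Delta_i^2 \ge 2\Delta_i^2$; since the operation selects the largest positive $\Delta_i$, the per-step gain is at least $2(\Delta_\text{max})^2$. For the optimal operation I would use that its output majorizes both the input (the identity cycle reproduces $\rho_S$) and the max-swap output, so its purity gain dominates the same bound $2(\Delta_\text{max})^2$. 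In either case the purity is monotone increasing and bounded, hence convergent, and its vanishing increments force $\Delta_\text{max} \to 0$ along the orbit.

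It then remains to identify the limit. I would show that any $\vec q \in \mathcal{R}$ with $\Delta_i \le 0$ for all $i$ — equivalently $q_i \le g\, q_{i-1}$ — must equal $\vec p^*$: iterating gives $q_i \le g^i q_0$, so normalization yields $q_0 \ge p_0^*$, while $\vec q \prec \vec p^*$ gives $q_0 \le p_0^*$, and equality then propagates to $q_i = g^i q_0 = p_i^*$. Because the $\Delta_i$ are continuous in $\vec p$ and $\Delta_\text{max} \to 0$, every accumulation point of the (compact) orbit satisfies $\Delta_i \le 0$ for all $i$ and thus coincides with $\rho_S^*$; possessing a single accumulation point, the orbit converges to $\rho_S^*$. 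This disposes of both operations simultaneously.

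The step I expect to be the main obstacle is the convergence mechanism rather than the fixed-point algebra: the max-swap map is discontinuous through its $\argmax$ over $i$, so I would deliberately avoid any contraction-mapping reasoning and instead let the purity Lyapunov function do the work, invoking only continuity of the individual $\Delta_i$ and the increment lower bound $2(\Delta_\text{max})^2$. The secondary subtlety is keeping the optimal and max-swap analyses unified; the clean device is that the optimal output always majorizes the max-swap output, so one purity-increment estimate and one fixed-point characterization cover both claims at once.
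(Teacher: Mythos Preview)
Your argument is correct and takes a genuinely different route from the paper's. The paper works directly in majorisation: it observes that each ordered partial sum of the target is monotone under either protocol, hence convergent; it then proves by an explicit $\epsilon$--$\delta$ contradiction (Lemma~\ref{lemma:fixpointcohmax}) that the limit of the max-swap is a fixed point, shows separately (Lemma~\ref{lemma:fixpointmajmaxcoh}) that every fixed point majorises $\rho_S^*$, and combines this with Theorem~\ref{thm:unibound} to pin down the limit. The optimal protocol is handled by proving (Lemma~\ref{lemma:optimalisoptimal}) that its $n$-fold output majorises that of any other protocol, in particular the max-swap.

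Your approach replaces the $d_S-1$ partial sums by the single scalar purity as a Lyapunov function and, crucially, extracts a quantitative per-step increment $\ge 2(\Delta_{\max})^2$; this lets you bypass the paper's $\epsilon$--$\delta$ fixed-point lemma entirely, since summability of the increments forces $\Delta_{\max}\to 0$ and a standard compactness/continuity argument identifies all accumulation points with $\rho_S^*$. Your fixed-point characterisation also folds the two directions (majorising and being majorised by $\rho_S^*$) into one short normalisation step. What the paper's route buys is that it never leaves the majorisation language and needs no Schur-convexity facts; what yours buys is a cleaner convergence mechanism and, as a by-product, a quantitative handle on the rate (the increment bound $2(\Delta_{\max})^2$), which the paper only addresses separately for a special case. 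Both rely on the same two structural inputs: Theorem~\ref{thm:unibound} for the invariant region, and the fact that the optimal output majorises the max-swap output from the same input.
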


\begin{proof}[Proof.]

Since the state of the target is passive before a particular $U_i$ is performed, and $\Delta_i>0$ in this case, the population is always moved from a smaller eigenvalue to a larger one. Thus the final state always majorizes the initial one. This is also true for the optimal coherent operation, since by construction it leaves the target in a state that majorizes all others. Thus the ordered partial sums of eigenvalues of the target under repeated application of either operation form monotonically increasing sequences. As these sums are bounded by one, both protocols must converge.

In the appendix, we show that the state converged to is a fixed point of the coherent max-swap operation, and that all fixed points of this protocol have the property of majorizing $\rho^*_S$. By Theorem \ref{thm:unibound}, the final state under any protocol must be majorized by $\rho^*_S$, proving that the only possible convergent point of the max-swap is $\rho^*_S$ itself.

The proof for the case of optimal coherent operations follows from the fact that the state under many cycles majorizes the state under an equal number of cycles of any other coherent operations (see appendix). Thus the state converged to must majorize that of the max-swap, but also still be majorized by $\rho^*_S$ (Theorem \ref{thm:unibound}). Therefore it also converges to $\rho^*_S$.

\end{proof}

This shows that the universal bound is tight in the coherent case. Since the operations involved are non-energy conserving, one cannot conclude the same for the incoherent paradigm. Nonetheless, a minor addition to the machine enables a similar statement.

In the coherent case we only required to swap qubit subspaces in the target with the maximum energy gap of the machine. Each of these swaps can be made energy-preserving by adding one more qubit to the machine to bridge the energy difference, and thermalising this qubit to a hot temperature $T_H > T_R$ in between cycles. This defines an \textit{incoherent} version of the max-swap protocol.

\begin{theorem}{(Incoherent attainability)}\label{thInattain}
In the limit $T_H \rightarrow \infty$, one can incoherently cool the target to at least the coherent cooling bound of $\rho^*_S$ in the infinite cycle limit, if one extends the machine by (at most) $d_S-1$ qubits, each with energy gap $\mathcal{E}_\text{max} - (E_i - E_{i-1})$, $i \in \{1,2,...,d_S-1\}$, or more generally, any extension that has qubit subspaces with these energy gaps.
\end{theorem}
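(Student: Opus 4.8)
The plan is to realise, within the incoherent (energy-preserving) paradigm, an operation that reproduces the cooling action of each coherent swap $U_i$ of Theorem~\ref{thCoattain}, using a thermalised ancilla qubit to absorb the energy mismatch of the swap. Recall that $U_i$ exchanges $\ket{E_{i-1}\mathcal{E}_\text{max}}$ and $\ket{E_i\mathcal{E}_0}$, whose energies differ by exactly $\mathcal{E}_\text{max}-(E_i-E_{i-1})$ (using $\mathcal{E}_0=0$). I would therefore attach to the $i$-th target gap an ancilla qubit $a$ with level splitting $\delta_i=\mathcal{E}_\text{max}-(E_i-E_{i-1})$ and define the incoherent elementary move as the transposition $\ket{E_{i-1}\mathcal{E}_\text{max}}\ket{0}_a \leftrightarrow \ket{E_i\mathcal{E}_0}\ket{1}_a$. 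A single line of energy bookkeeping confirms that both states carry total energy $E_{i-1}+\mathcal{E}_\text{max}$, so this move commutes with $H$ and is admissible; the ancilla is rethermalised to $T_H$ between cycles. If several target gaps coincide, one ancilla serves several moves, giving the stated ``at most'' $d_S-1$ qubits.

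Second, I would compute the reduced action on the target--machine populations. Let $a,b$ denote the populations of $\ket{E_{i-1}\mathcal{E}_\text{max}}$ and $\ket{E_i\mathcal{E}_0}$, and $r_0,r_1$ the ancilla populations. Tracing out the ancilla after the transposition sends $a\mapsto(a+b)r_1$ and $b\mapsto(a+b)r_0$. In the limit $T_H\to\infty$ the ancilla is maximally mixed, $r_0=r_1=\tfrac12$, so the move symmetrically drives both populations to $(a+b)/2$. The resulting increase of the lower target level is $\Delta_i^{\text{inc}}=(b-a)/2$: exactly half the coherent increment $\Delta_i=b-a$ of Theorem~\ref{thCoattain}, and of the same sign. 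Equivalently, the target subspace is swapped with a virtual qubit of excited-to-ground ratio $q_\text{max}r_0/(q_0 r_1)\to e^{-\beta_R\mathcal{E}_\text{max}}$, where $q_0,q_\text{max}$ are the thermal machine weights, so that $a=p_{i-1}q_\text{max}$ and $b=p_i q_0$ for a freshly thermalised machine.

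Because $\Delta_i^{\text{inc}}$ shares its sign with $\Delta_i$, the convergence argument of Theorem~\ref{thCoattain} carries over unchanged: performing the most beneficial move each cycle produces monotonically increasing, bounded ordered partial sums of the target eigenvalues, so the incoherent max-swap converges. Moreover its fixed-point (no beneficial move) condition $\Delta_i^{\text{inc}}\le 0$ reads $p_i q_0\le p_{i-1}q_\text{max}$, i.e.\ $p_i/p_{i-1}\le e^{-\beta_R\mathcal{E}_\text{max}}$ for every successive pair, which is identical to the coherent fixed-point condition. I would then invoke the appendix lemma already used for Theorem~\ref{thCoattain}---that every such fixed point majorises $\rho^*_S$---and combine it with the universal bound of Theorem~\ref{thm:unibound}, which forces the incoherent output to be majorised by $\rho^*_S$. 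The only state compatible with both is $\rho^*_S$ itself, establishing attainability.

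The step I expect to be the main obstacle is keeping the protocol strictly incoherent for $d_S>2$. The coherent proof freely relabels the target into a passive state between swaps, but such a relabelling permutes target energy levels and is not energy-preserving. I would resolve this by showing that, starting from the thermal (hence passive) target and applying only beneficial cooling moves, the pair targeted by each move already satisfies $p_{i-1}\ge p_i$, so every move is majorisation-increasing without any reordering; should a clean invariance argument fail, the same bridging ancillas can be used to render the required reorderings energy-conserving, at no cost to the final bound.
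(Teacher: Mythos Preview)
Your overall plan---bridge the energy mismatch of each $U_i$ with a hot ancilla, run the resulting energy-conserving max-swap, and analyse its fixed points---is exactly the paper's route, and your population bookkeeping on the ancilla is correct. Two points need adjustment.

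The minor one: the sandwich at the end does not close. Once you append the ancillas, the ``machine'' to which Theorem~\ref{thm:unibound} applies is the \emph{extended} one, whose maximal gap strictly exceeds $\mathcal{E}_\text{max}$; the universal upper bound is therefore a state strictly colder than $\rho_S^*$, and you cannot conclude equality with $\rho_S^*$. Fortunately Theorem~\ref{thInattain} only asserts attainability of \emph{at least} $\rho_S^*$, so the fixed-point half of the sandwich already suffices and this step can simply be dropped.

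The substantive gap is the one you flag yourself. Your convergence argument rests on the claim that the targeted pair always satisfies $p_{i-1}\ge p_i$, so that each move is a reverse Robin Hood and hence majorisation-increasing. This invariant is false. Take a three-level target with nearly degenerate gaps (so the initial thermal state has $p_0\approx p_1\approx p_2$) and a qubit machine; the first beneficial move acts on $(0,1)$, after which $p_1'<p_2'$, and the next most beneficial move is on $(1,2)$---a pair now in the wrong order. Your fallback does not rescue this: the existing ancillas have the wrong gaps for target-level permutations, and even with the right gaps, at $T_H\to\infty$ an energy-conserving two-level swap mediated by a maximally mixed ancilla merely averages the two populations, so no sequence of such moves reorders them. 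The paper's appendix sidesteps the whole issue by tracking not the ordered partial sums but the \emph{energy-basis} partial sums $D_k=\sum_{i<k}\bra{E_i}\rho_S\ket{E_i}$: every incoherent max-swap step transfers positive weight from level $\bar k$ to $\bar k-1$, so $D_{\bar k}$ strictly increases and no $D_k$ decreases. These are monotone and bounded, hence convergent, with no passivity needed along the way. The fixed-point condition $\tilde\Delta_i\le0$ then forces $p_i/p_{i-1}\le g<1$ for all $i$, so the limit is automatically passive and the appendix lemma on fixed points applies verbatim.
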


\begin{proof}

The proof is analogous to that of Theorem 2. Here too, the (incoherent) max-swap operation has the property of the final state always majorising the initial one. It follows that the repeated application converges. One can prove (see appendix) that the convergent point is a fixed point of the operation, and that in the limit $T_H \rightarrow \infty$, the fixed points all majorise $\rho^*_S$, proving attainability as desired.

\end{proof}
Theorem \ref{thInattain} shows that the two paradigms are in fact closely related to one another in terms of cooling performance. In the limit of \(T_H \rightarrow \infty\), a finite resource  of the same dimension of the target suffices to bridge the gap between them.

\textit{Bridging extremal cooling paradigms.}  So far, we dealt exclusively with stroke type machines, in the sense that we allowed for rethermalization or unitary operations in well separated discrete time steps. One may, however, wonder if the bounds on cooling set by Theorems \ref{thm:unibound}-\ref{thInattain} are also valid for autonomous thermal machines, where thermalization and machine cycles happen simultaneously and continuously.
Interestingly, a link exists between those machines and our incoherent paradigm, see also \cite{Raam}. In essence, the energy-preserving unitaries of the incoherent paradigm are replaced by a time-independent energy-preserving interaction Hamiltonian between the target system and the machine. It is then straightforward to prove the following statement.

\begin{theorem}{(Autonomous machines correspondence)}\label{th4} Consider an arbitrary machine and a qubit target system. When the target coupling to the thermal bath is zero, one can cool the target to \(\beta^*\) with an autonomous machine by extending the machine as in Theorem \ref{thInattain}.
\end{theorem}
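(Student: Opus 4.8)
The plan is to realise the incoherent max-swap of Theorem~\ref{thInattain} as the steady state of a continuously coupled autonomous machine, and to read off the target temperature from the virtual-qubit construction \cite{virtual}. First I would discard all machine levels except the ones that matter: the $\mathcal{E}_\text{max}$ transition of the original machine, viewed as a qubit of gap $\mathcal{E}_\text{max}$ held at $\beta_R$, together with the single bridging qubit of Theorem~\ref{thInattain} of gap $\mathcal{E}_\text{max}-E_S$ held at $\beta_H$. In place of the stroke swap I would introduce the time-independent coupling
\begin{equation}
H_{\text{int}} = g\left(\ket{E_1}\!\bra{E_0}\otimes\ket{\mathcal{E}_0,1_h}\!\bra{\mathcal{E}_\text{max},0_h} + \text{h.c.}\right),
\end{equation}
where $\ket{0_h},\ket{1_h}$ are the bridging-qubit states. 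The resonance $E_S+(\mathcal{E}_\text{max}-E_S)=\mathcal{E}_\text{max}$ guarantees $[H_{\text{int}},H_S+H_M]=0$, so the coupling only connects degenerate levels and is energy-preserving. The evolution is then generated by a Lindblad equation with thermal dissipators on the two machine qubits (at $\beta_R$ and $\beta_H$) and, implementing the hypothesis of vanishing target--bath coupling, \emph{no} dissipator on the target.

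Next I would identify the virtual qubit coupled to the target: the two machine states $\ket{v_-}=\ket{\mathcal{E}_\text{max},0_h}$ and $\ket{v_+}=\ket{\mathcal{E}_0,1_h}$ between which $H_{\text{int}}$ shuttles the target excitation. Their gap is exactly $E_S$, and since the baths keep the machine qubits thermal their population ratio is $P(v_-)/P(v_+)=e^{-[\beta_R\mathcal{E}_\text{max}-\beta_H(\mathcal{E}_\text{max}-E_S)]}$. Because the target exchanges energy only through this one resonant channel, detailed balance in steady state forces $p_1/p_0=P(v_-)/P(v_+)$, i.e.\ the target relaxes to the virtual inverse temperature $\beta_v=[\beta_R\mathcal{E}_\text{max}-\beta_H(\mathcal{E}_\text{max}-E_S)]/E_S$. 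Sending $T_H\to\infty$ ($\beta_H\to0$) gives $\beta_v\to\beta_R\mathcal{E}_\text{max}/E_S=\beta^*$, exactly Eq.~\eqref{math:besttempqubit}; by Theorem~\ref{thm:unibound} this is simultaneously the coldest attainable value, so the autonomous machine saturates the bound.

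The hard part will be upgrading the resonant-rate heuristic to a statement about the true fixed point of the full master equation: one must check that the coherent part of the dynamics does not spoil convergence and that the machine qubits genuinely remain (close to) thermal. This is cleanest in the weak-coupling / fast-rethermalisation regime, where the machine is slaved to its Gibbs state and the target effectively sees a single thermal bath at $T_v$; the associated dissipative map is then a contraction whose unique fixed point is the target at $\beta_v$. I would stress that the zero-bath-coupling assumption is precisely what removes the competing $\beta_R$ dissipator that would otherwise pin the target somewhere between $T_R$ and $T_v$, allowing the virtual temperature to be reached in full. This continuous, autonomous realisation thus attains the same value as the incoherent strokes, in line with the stroke/continuous correspondence of \cite{Raam}.
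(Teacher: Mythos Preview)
Your proposal is correct and follows the same route as the paper: reduce to the smallest autonomous refrigerator (target, the $\{\ket{0}_M,\ket{\mathcal{E}_\text{max}}_M\}$ machine transition at $\beta_R$, plus the bridging qubit at $\beta_H$) and read off the steady-state virtual temperature, which becomes $\beta^*$ as $T_H\to\infty$. The paper's proof is terser---it simply cites \cite{auto0} for the linear master-equation analysis that you sketch explicitly---and adds one remark you leave implicit, namely that the intermediate machine levels between $\ket{0}_M$ and $\ket{\mathcal{E}_\text{max}}_M$ are inert under the chosen interaction and hence do not affect the result.
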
 

\begin{proof}
This follows directly from known results for small autonomous machines. Following the derivation in \cite{auto0}, which is based on a linear master equation, one can replace the second qubit by the subnormalised qubit \(\rho_{0,\text{max}}=\frac{1}{\mathcal{Z}}(\ket{0}\bra{0}+ e^{-\beta_R\mathcal{E}_{\text{max}}} \ket{\mathcal{E}_{\text{max}}} \bra{\mathcal{E}_{\text{max}}})\). Furthermore, adding energy levels to the machine between \(\ket{0}\) and \(\ket{\mathcal{E}_{\text{max}}}\) does not affect the analysis since the added levels remain invariant under this evolution.
\end{proof}
 While it is expected that the correspondence holds for higher dimensional targets as well, it requires an analysis of the master equation that we leave for further work. The zero coupling of the target to the environment needed for exact correspondence comes from the fact that in the repeated cycles paradigm, there is never any reheating between the cycles, which corresponds to an open quantum system in which the target is not coupled to a bath. Any realistic coupling would only worsen the bound. The correspondence of Theorem \ref{th4} links extremal paradigms of control in quantum thermal machines, perfect batteries and well-timed operations on the one hand, and autonomous machines with no external source of work or timing control on the other, demonstrating that increased control on a quantum system does, after all, not significantly impact the cooling performance in this context, but that the difference manifests as different work costs and the actual challenge rather lies in designing appropriate interaction Hamiltonians for autonomous machines \cite{automitch}.

\textit{Smallest machines for maximal cooling.}
Interestingly the smallest possible implementation can already attain the bounds. For a qubit target of energy gap \(E_S\), the simplest coherent machine consists of a single qubit, of energy gap \(E_M\). The Hilbert space of the joint target and machine system is spanned by \(\{\ket{ij}_{SM}\}_{i,j=0,1}\). 
Maximal cooling is achieved by a single unitary operation, swapping the states $\ket{01}_{SM}$ and $\ket{10}_{SM}$. The final state of the target thus has the same populations and Gibbs ratio as the initial state of the machine, i.e.
\begin{align}\label{eq:cohqubittemp}
\beta^* E_S &= \beta_R E_M,
\end{align}
which matches Eq. \eqref{math:besttempqubit} for maximal cooling with $E_M = \mathcal{E}_{max}$. {This single qubit machine is also sufficient to cool higher dimensional targets to the bound \eqref{math:beststate}, using the coherent max-swap protocol (Theorem 2).}

Note that if the machine had additional energy levels in between $\ket{0}_M$ and $\ket{1}_M$, then a single swap would not be sufficient to recover Eq.~\eqref{eq:cohqubittemp} for a qubit target, and one would only do so in the limit of infinite cycles. {Furthermore, higher dimensional targets would also be cooled at a slower rate (per cycle).} In this sense, a single qubit machine is more effective than a more complex machine. The advantage of a more complex machine is revealed when analyzing the work cost of the operation (see \cite{ourPRE}), and in general can help increase $\mathcal{E}_{max}$ by composition.

In the incoherent paradigm, a single qubit machine is unable to cool. The simplest machine allowing for cooling features an extra qubit of energy $E_A = E_M - E_S$, as proposed in Theorem~\ref{thInattain}. This corresponds to the smallest autonomous refrigerator \cite{auto1}. The optimal unitary for cooling is now the swap between the states $\ket{010}_{SMA}$ and $\ket{101}_{SMA}$. Comparing this to the single qubit coherent machine, we see that the swap is identical w.r.t. the target \((S)\) and the original machine \((M)\), and the role of the additional qubit \((A)\) is to enable this transition for incoherent cooling by bridging the energy gap between \(S\) and \(M\). In coherent cooling, this role is implicitly fulfilled by the battery that allows for arbitrary unitary operations.

Cooling over multiple cycles consists in repeatedly thermalising $M$ to $T_R$ and $A$ to $T_H$ and performing the relevant swap. In the limit of infinite cycles, the temperature achieved by the target is given by 
\begin{align}\label{eq:incohqubittemp}
	\beta^*_{\text{inc}} E_S &= \beta_R E_M - \beta_H (E_M - E_S).
\end{align}
In the limit of \(T_H \rightarrow \infty\), the second term on the RHS vanishes and we get Theorem \ref{thInattain}. 

\textit{Conclusion.} We derive universal and attainable bounds for cooling using any quantum machine and target system. 
Interestingly, this bound only depends on the largest energy gap present within the quantum refrigerator and is independent of all other spectral properties and valid for all temperature regimes. For qubits, this bound is already attainable by the simplest possible machines: either by a single cycle of a coherent one-qubit machine or, in the continuous autonomous quantum refrigerator paradigm, as a steady state of a two-qubit machine.
These results unify different operational approaches to quantum thermodynamics and thus go beyond one particular approach. They embody one of the central conceptual pillars of statistical physics, that, despite the potential complexity, thermodynamic tasks can be characterised by a small number of system parameters that need no detailed knowledge of the microstates. The universal bounds and the attainability protocols presented here, are, in general, of course highly idealised and go beyond realistic control over many-body quantum systems. That makes the attainability by few qubit machines all the more interesting, however, as they could potentially be realised with state-of-the-art quantum technologies. A more detailed analysis of few qubit machines in both paradigms, including a finite number of cycles and the respective work costs of achieving the temperature bound can be found in \cite{ourPRE}.
Future investigations should include a trade-off between complexity and achievable $\mathcal{E}_{max}$, as the required unitaries would quickly become impossible to even approximate in the regime of large machines.  Another question beyond asymptotics is the actual convergence rate. We showcase the simplest case in the appendix and hope to gain more in-depth insight in the future. One could also further extend the operational paradigm, by including quantum measurements of the target or working fluid, such as in \cite{qmc}, keeping in mind that there, one also implicitly assumes large measurement machines that replace the refrigerator \cite{YelNico}.  Or one could investigate the advantage of more general thermal couplings, such as in \cite{Nayeli}. Finally, in the cooling task considered here, both initial and final states are diagonal in the energy eigenbasis, such that further limitations from the manipulation of coherences do not apply \cite{coherence1,coherence2}. It could be interesting to study how initial coherences affect the results.

\textit{Acknowledgments} We are grateful to F. Hirsch, P. P. Hofer, M.-O. Renou, and T. Krivachy for fruitful discussions. We would also like to acknowledge all referees for productive and challenging comments. Several authors acknowledge the support from the Swiss NFS and the NCCR QSIT: R.S. through the grants Nos. $200021\_169002$ and $200020\_165843$, N.B. through the Starting grant DIAQ and grant $200021\_169002$, F.C. through the AMBIZIONE grant PZ00P2$\_$161351, and G.H. through the PRIMA grant PR00P2$\_$179748 and Marie-Heim V\"ogtlin grant 164466. MH acknowledges support from the Austrian Science Fund (FWF) through the START project Y879-N27. JBB acknowledges support from the Independent Research Fund Denmark.

\appendix
\onecolumngrid

\section{Cooling bound\label{sec:bound}}

In this section, we give more details on the proof of {\color{black}Theorem \ref{thm:unibound} of the main text (see Claim \ref{theorem:coherentmajorization}) as well as comment on other notions of cooling.}

\begin{claim}\label{theorem:coherentmajorization}
	Under single coherent operations, if the state of the system is initially majorized by the state \(\rho_S^*\) of Theorem 1, its final state is also majorized by \(\rho_S^*\).	
\end{claim}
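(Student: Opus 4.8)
The plan is to establish the majorization $\rho_S' \prec \rho_S^*$ by verifying, for every $k \in \{1,\dots,d_S\}$, the inequality on sorted partial sums $\sum_{n=0}^{k-1}\lambda_n^\downarrow(\rho_S') \le \sum_{n=0}^{k-1}\lambda_n^\downarrow(\rho_S^*)$, where $\lambda^\downarrow$ denotes eigenvalues in decreasing order. The core idea is to relate the top-$k$ spectral sum of the marginal $\rho_S'$ to a top-$(kd_M)$ spectral sum of the global state $\rho_{SM}=\rho_S\otimes\tau_M$, using that a global unitary leaves the global spectrum invariant, and then to exploit the very particular block structure of the spectrum of $\rho_S^*\otimes\tau_M$.

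First I would use Ky Fan's maximum principle to write $\sum_{n=0}^{k-1}\lambda_n^\downarrow(\rho_S') = \max_{P_S}\Tr[P_S\,\rho_S']$ over rank-$k$ projectors $P_S$ on the target. Writing $\rho_S' = \Tr_M[U\rho_{SM}U^\dagger]$ and moving the unitary onto the projector gives $\Tr[P_S\,\rho_S'] = \Tr[Q\,\rho_{SM}]$ with $Q = U^\dagger(P_S\otimes\id_M)U$ a projector of rank $k d_M$. Since such $Q$ range over a subset of all rank-$(kd_M)$ projectors, a second application of Ky Fan bounds this by the top-$(kd_M)$ spectral sum of $\rho_{SM}$, yielding $\sum_{n=0}^{k-1}\lambda_n^\downarrow(\rho_S')\le \sum_{m=0}^{kd_M-1}\lambda_m^\downarrow(\rho_S\otimes\tau_M)$. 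This is the quantitative content of the Schur--Horn lemma invoked in the main text.

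Next I would invoke stability of majorization under tensoring with a fixed state: since $\rho_S \prec \rho_S^*$ by hypothesis, $\rho_S\otimes\tau_M \prec \rho_S^*\otimes\tau_M$ (Corollary 1.2 of \cite{bondar}), so the top-$(kd_M)$ sum above is bounded by that of $\rho_S^*\otimes\tau_M$. It then remains to compute the latter. Its eigenvalues are the products $q_i r_j$ with $q_i = g^i/\mathcal{Z}$, $g = e^{-\beta_R\mathcal{E}_{\max}}$, and $r_j$ the Gibbs weights of $\tau_M$; taking logarithms, $q_i r_j$ is a decreasing function of the ``energy'' $i\mathcal{E}_{\max}+\mathcal{E}_j$. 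The crucial observation is that, because $0\le\mathcal{E}_j\le\mathcal{E}_{\max}$, these energies fall into non-overlapping windows $[\,i\mathcal{E}_{\max},(i+1)\mathcal{E}_{\max}\,]$ indexed by $i$. Hence the eigenvalues organize into $d_S$ blocks of size $d_M$, with every eigenvalue of block $i$ at least every eigenvalue of block $i+1$ (ties only at window boundaries). The top $kd_M$ eigenvalues are therefore exactly blocks $0,\dots,k-1$, of total weight $\sum_{i=0}^{k-1} q_i\sum_{j}r_j = \sum_{i=0}^{k-1}q_i = \sum_{n=0}^{k-1}\lambda_n^\downarrow(\rho_S^*)$. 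Chaining the inequalities closes the argument for each $k$, and at $k=d_S$ both sides equal one.

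The main obstacle is the clean block decomposition of the spectrum of $\rho_S^*\otimes\tau_M$: everything hinges on $\rho_S^*$ being built from the single ratio $g=e^{-\beta_R\mathcal{E}_{\max}}$ associated with the \emph{largest} machine gap, which is exactly what forces the energy windows to separate and makes the top-$(kd_M)$ sum collapse onto the top-$k$ sum of $\rho_S^*$. With any smaller ratio the blocks would interleave and the chain would break; this is the structural reason the bound is governed by $\mathcal{E}_{\max}$ alone. The remaining ingredients (Ky Fan, Schur--Horn, and tensor-stability of majorization) are standard, and I expect them to be routine.
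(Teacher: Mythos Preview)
Your proposal is correct and follows essentially the same route as the paper's proof: bounding the top-$k$ eigenvalue sum of $\rho_S'$ by the top-$(kd_M)$ eigenvalue sum of $\rho_S\otimes\tau_M$ (you phrase this via Ky Fan's principle, the paper via Schur--Horn, which are two sides of the same coin), then invoking tensor stability of majorization from \cite{bondar}, and finally exploiting the block ordering of the spectrum of $\rho_S^*\otimes\tau_M$ (your energy-window argument is exactly the paper's ratio inequality $\mu_{ij}/\mu_{i+1,j'}\ge 1$). The only cosmetic difference is that your Ky Fan formulation is basis-free while the paper works with diagonals in a chosen basis.
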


\begin{proof}

Recall that at the beginning of a single coherent operation the machine is thermal with respect to the  environment temperature $T_R$, and that one can perform any joint unitary on the system and machine. The final state of the system under a single operation is thus
\begin{align}\label{app:basiccoherentoperation}
	\rho_S^\prime &= \Tr_M \left( U \; \left( \rho_S \otimes \tau_M(T_R) \right) \; U^\dagger \right).
\end{align}
	
For the following, we denote the vector of eigenvalues of any state $\rho$ by $\vec{\rho}$, where the eigenvalues are arranged in decreasing order. Consider the problem of optimizing the $k^{th}$ partial sum of $\vec{\rho}_S^{\,\prime}$, over all possible joint unitaries on the system and machine,
\begin{align}
	\max_{U_{SM}} s_k, \quad \text{where} \; s_k &= \sum_{i=0}^{k-1} \left[\vec{\rho}_S^{\,\prime}\right]_i, \quad k \in \{1,2,...,d_S\}.
\end{align}
This is equivalent to optimizing the $k^{th}$ partial sum of the diagonal elements of $\rho^\prime_S$ in a fixed basis, as one can always perform the local unitary that leaves the system diagonal with its eigenvalues arranged in decreasing order along the diagonal (this is the optimal choice to optimise the partial sums, from the Schur-Horn theorem). Indeed, denoting the latter by \(U_S\), we have
\begin{equation}
\text{Tr}_M(U_S \otimes \mathds{1} \rho' U_S^{\dagger} \otimes \mathds{1}) = U_S \text{Tr}_M(\rho') U_S^{\dagger}.
\end{equation}
	
Expressing the final joint state of the system and machine $\rho_{SM}^\prime$ in a product basis (where the choice of basis for the system is the same as above, and for the machine is arbitrary), each diagonal element of $\rho_S^\prime$ is the sum of the corresponding $d_M$ diagonal elements of $\rho_{SM}^\prime$. Thus the optimization of $s_k$ is equivalent to optimizing the partial sum of the first $k\cdot d_M$ diagonal elements of $\rho_{SM}^\prime$, which we shall denote by $\Sigma_k$.
	
However, by the Schur-Horn theorem, the vector of diagonal elements of $\rho^\prime_{SM}$ is majorized by the vector of eigenvalues $\vec{\rho}_{SM}^{\,\prime}$, which in turn is the same as that of the initial state $\vec{\rho}_{SM}$.	As $\rho_{SM}$ is a product state between the system and the thermal state of the machine, its eigenvalues are the products
\begin{align}
	\lambda_{ij} &= \left[\vec{\rho}_S\right]_i \cdot \left[\vec{\tau}\right]_j, \quad i \in \{0,2,...,d_S-1\}, \; j \in \{0,2,...,d_M-1\}.
\end{align}
Arranging these in decreasing order, and summing over the largest $k \cdot d_M$ of them, one has an expression of the form
\begin{align}
	\Sigma_k &= \left[\vec{\tau}\right]_1 \sum_{i=0}^{n_1} \left[\vec{\rho}_S\right]_i + \left[\vec{\tau}\right]_2 \sum_{i=0}^{n_2} \left[\vec{\rho}_S\right]_i + \left[\vec{\tau}\right]_3 \sum_{i=0}^{n_3} \left[\vec{\rho}_S\right]_i + ...,
\end{align}
i.e. the largest eigenvalue of the machine with a subset of the largest eigenvalues of the system, plus the second largest eigenvalue of the machine with a possibly smaller subset of the largest system eigenvalues, and so on ($n_1 \geq n_2 \geq ...$).

Moreover, as $\vec{\rho}_S \prec\vec{\rho}_S^*$, we have from Corollary 1.2 of Ref.~\cite{bondar} that \(\vec{\rho}_S \otimes \vec{\tau}_M \prec\vec{\rho}_S^*\otimes \vec{\tau}_M\) and so \(\Sigma_k\) is upper bounded by the sum of the \(k \cdot d_M\) largest eigenvalues of \(\vec{\rho}_S^*\otimes \vec{\tau}_M\). 
However, the ordering of the eigenvalues of $\vec{\rho}_S^* \otimes \vec{\tau}_M$ is very simple. {\color{black} Recall that the eigenvalues of $\rho_S^*$ are related by
\begin{equation}\label{app:gratio}
  g:=  e^{- \beta_R \mathcal{E}_{max}}=\frac{\left[\vec{\rho}_S^*\right]_{i+1}}{\left[\vec{\rho}_S^*\right]_i}, \quad i= 0 , \dots, d_S-2,
\end{equation}
where $\mathcal{E}_{max}$ is the largest energy gap in the machine spectrum.} Denoting the eigenvalues of the product state $\rho_S^*\otimes \tau_M$ by $\mu_{ij}$, we have
\begin{align}
	\frac{\mu_{ij}}{\mu_{i+1,j^\prime}} &= \frac{ \left[\vec{\rho}_S^*\right]_i \cdot \left[\vec{\tau}\right]_j }{ \left[\vec{\rho}_S^*\right]_{i+1} \cdot \left[\vec{\tau}\right]_{j^\prime} } = \frac{1}{g} \cdot \frac{\left[\vec{\tau}\right]_j}{\left[\vec{\tau}\right]_{j^\prime}} = \frac{1}{g} e^{-(\mathcal{E}_j  - \mathcal{E}_{j^\prime})/T} \geq \frac{1}{g} e^{-\mathcal{E}_{max}/T} = 1, \quad \forall i \in \{0,\dots, d_S-2\}, j,j^\prime \in \{0, \dots, d_M-1\}.
\end{align}
Thus the ordering of eigenvalues of $\vec{\rho}_S^* \otimes \vec{\tau}_M$ in decreasing order begins with $\left[\vec{\rho}_S^*\right]_0$ times all the eigenvalues of the machine, then $\left[\vec{\rho}_S^*\right]_1$ times all of the eigenvalues, and so on. The partial sum of the $k \cdot d_M$ largest eigenvalues is then
\begin{align}
	\left[\vec{\rho}_S^*\right]_0 \sum_{j=0}^{d_M-1} \left[\tau\right]_j + \left[\vec{\rho}_S^*\right]_1 \sum_{j=0}^{d_M-1} \left[\tau\right]_j + ... + \left[\vec{\rho}_S^*\right]_{k-1} \sum_{j=0}^{d_M-1} \left[\tau\right]_j = \sum_{i=0}^{k-1} \left[\vec{\rho}_S^*\right]_i.
\end{align}

Stringing together the inequalities we have gone through, we have proven that
\begin{align}
	s_k &= \sum_{i=0}^{k-1} \left[\vec{\rho}_S^{\,\prime}\right]_i \leq \Sigma_k \leq \sum_{i=0}^{k-1} \left[\vec{\rho}_S^*\right]_i, \quad k \in \{0,2,...,d_S-1\},
\end{align}
that is to say, $\vec{\rho}_S^* \succ \vec{\rho}_S^{\,\prime}$, as desired.
\end{proof}

{\color{black} 

As suggested in the main text, $\rho^*_S$ is actually also the coldest state reachable via coherent operations with respect to other notions of cooling. For any Schur-concave or -convex function such as the von Neumann entropy or the purity this directly follows from the fact that $\rho^*_S$ majorizes any other state attainable by coherent operations. For the mean energy, since that measure depends on the ordering of the diagonal entries, one needs to additionally use the fact that $\rho^*_S$ is passive. With that one can easily show that $\rho_S^*$ has an average energy that lower bounds that of any final state in the coherent paradigm which is what we want to prove next.

\textit{Proof.} Let $\sigma_S$ be a state majorized by $\rho_S^*$ and denote by $\vec{\text{diag}}(\sigma_S)$ the vector of diagonal entries of $\sigma_S$ in the energy eigenbasis. Recall that $\rho_S^*$ is diagonal in the energy basis, and ordered so that it's eigenvalues $\left[\rho_S^*\right]_i$ are decreasing w.r.t. increasing energy $E_i$. Its average energy is then
\begin{align}
   \text{Tr}\left( \rho_S^* H\right) &= \sum_{i=0}^{d_S-1} \left[ \rho_S^* \right]_i E_i\\
        &= E_0 \sum_{i=0}^{d_S-1} \left[ \rho_S^* \right]_i + \left( E_1 - E_0 \right) \sum_{i=1}^{d_S-1} \left[ \rho_S^* \right]_i + \left( E_2 - E_1 \right) \sum_{i=2}^{d_S-1} \left[ \rho_S^* \right]_i + ... + \left( E_{d_S-1} - E_{d_S-2} \right) \left[ \rho_S^* \right]_{d_S-1} \\
        &\leq E_0 \sum_{i=0}^{d_S-1} \left[ \sigma_S \right]_{ii} + \left( E_1 - E_0 \right) \sum_{i=0}^{d_S-1} \left[ \sigma_S \right]_i + \left( E_2 - E_1 \right) \sum_{i=0}^{d_S-1} \left[ \sigma_S \right]_{ii} + ... +\left( E_{d_S-1} - E_{d_S-2} \right) \left[ \sigma_S \right]_{d_S-1} \label{eq:energyinter}\\
        &= \text{Tr}(\sigma_S H),
\end{align}
where $\left[ \sigma_S \right]_{ii}$ are the diagonal elements of $\sigma_S$ and (\ref{eq:energyinter}) holds since from $\vec{\rho}_S^* \succ \vec{\sigma}_S \succ \vec{\text{diag}}(\sigma_S)$, where $\vec{\sigma}_S \succ \vec{\text{diag}}(\sigma_S)$ follows from the Schur-Horn Theorem, it follows that 

\begin{equation}
    \sum_{i=k}^{d_S} \left[ \rho_S^* \right]_i \leq \sum_{i=k}^{d_S} \left[ \sigma_S \right]_{ii} \quad \forall k \in \{ 1,2,...,d_S \}.
\end{equation}

}

\section{Coherent attainability}\label{sec:cohattain}

In this section we provide more details about the proof of Theorem \ref{thCoattain} of the main text. As in the previous Section we denote the vector of eigenvalues of $\rho$ as $\vec{\rho}$, where the eigenvalues are arranged in decreasing order. In the main text we discussed the convergence of two different cooling protocols, both of which we proceed to formally define.

\begin{definition}[Optimal coherent protocol]

Given a joint state $\rho_{SM}$ let $U_{\text{opt}}$ be the unitary that reorders (and potentially rotates) the eigenvalues of $\rho_{SM}$ as largest in the energy subspace $\ket{00} \bra{00}_{SM}$, second largest in $\ket{0\mathcal{E}_1} \bra{0\mathcal{E}_1}_{SM}$ and so on all the way up to $\ket{E_{d_S-1} \mathcal{E}_\text{max}} \bra{E_{d_S-1} \mathcal{E}_\text{max}}_{SM}$. That is 
\begin{equation}\label{eq:optcohunitary}
    U_{\text{opt}} \rho_{SM} U_{\text{opt}}^{\dagger} := \sum_{\substack{i \in \{0, \dots, d_S-1\},\\j \in \{0, \dots, d_M-1\}}} [\vec{\rho}_{SM}]_{i \cdot d_M +j} \ket{E_i \mathcal{E}_j} \bra{E_i \mathcal{E}_j}.
\end{equation}

The optimal coherent protocol is then defined as applying $A$ to the system state in each step, where
\begin{equation}
   \rho_S \mapsto A(\rho_S):= \text{Tr}(U_{\text{opt}} \rho_S \otimes \tau_M (\beta_R) U_{\text{opt}}^{\dagger}).
\end{equation}
\end{definition}

Here we prove that this protocol converges to \(\rho_S^*\), and so does another simpler protocol, the \emph{coherent max-swap} protocol, which only uses the $\mathcal{E}_{\text{max}}$ information of the machine, i.e. the qubit subspace of the machine consisting of the ground and maximally excited states, $\ket{0}_M$ and $\ket{\mathcal{E}_\text{max}}_M$. This protocol is formally defined as

\begin{definition}[Coherent max-swap protocol]

Given a system state $\rho_S$, let $\bar{k}$ be the index $i \in \{1,\dots,d_S-1\}$ for which $\Delta_i := [\vec{\rho}_S]_{i} [\vec{\tau}_M]_0 - [\vec{\rho}_S]_{i-1} [\vec{\tau}_M]_{d_M-1}$ is the greatest if there exists a positive $\Delta_i$, else let $\bar{k}=0$. That is
\begin{equation}
\bar{k}:= \begin{cases}
\argmax\limits_{i=1,\dots,d_S-1} \Delta_i &, \text{if} \max_i \Delta_i > 0\\
0&, \text{else}.
    \end{cases}
\end{equation}
Let $U_0=\mathds{1}_{SM}$, and for $i=1,\dots, d_S-1$ let $U_i$ be defined as follows.
\begin{equation}
\begin{aligned}
    U_i = \mathds{1}_{SM} &- \ket{E_{i-1} \mathcal{E}_{\text{max}}} \bra{E_{i-1} \mathcal{E}_{\text{max}}} - \ket{E_{i} \mathcal{E}_{0}} \bra{E_{i} \mathcal{E}_{0}}\\
    &  + \ket{E_{i-1} \mathcal{E}_{\text{max}}} \bra{E_{i} \mathcal{E}_{0}} + \ket{E_{i} \mathcal{E}_{0}} \bra{E_{i-1} \mathcal{E}_{\text{max}}}.
    \end{aligned}
\end{equation}

For a given system state $\rho_S$, let $\mathcal{P}(\rho_S)$ be its corresponding passive state, i.e. $\mathcal{P}(\rho_S) := \sum_i [\vec{\rho}_S]_i \ket{E_i} \bra{E_i}$.

The coherent max-swap protocol is then defined as applying $B$ to the system state in each step, where
\begin{equation}
    \rho_S \mapsto B(\rho_S) := \mathcal{P} \left( \text{Tr}_M [ U_{\bar{k}} \mathcal{P}(\rho_S) \otimes \tau_M(\beta_R) U_{\bar{k}}^{\dagger}]\right).
\end{equation}

Note that the transformation $\rho \rightarrow \mathcal{P}(\rho)$ is a unitary, so the above map can be expressed via a single joint unitary on the system and machine, which is an allowed coherent operation.

\end{definition}

The proof of convergence for the coherent max-swap protocol consists of the following 3 steps,

\begin{enumerate}
    \item For each $k=0,\dots,d_S-1$, the $k^{\text{th}}$ partial sum of the state obtained after applying the protocol n times is, as a sequence over n, convergent.
    
    \item The convergent point is a fixed point of the protocol.
    
    \item If the initial state of the system $\rho_S$ satisfies $\rho_S \prec \rho_S^*$, then the convergent point of the protocol is $\rho_S^*$.
\end{enumerate}

The optimal coherent protocol proof of convergence can be done in exactly the same way. After proving that the protocol converges as in point 1., we will however here use the convergence of the coherent max-swap protocol and properties of the optimal coherent protocol instead to show that the converging point is $\rho_S^*$. We now turn to the proof of point 1. for both protocols

\begin{lemma}\label{lemma:cohmaxconv}
The coherent max-swap protocol converges.
\end{lemma}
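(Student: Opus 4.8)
The plan is to show that a single application of the coherent max-swap map $B$ can only increase the state in the majorization order, i.e. $B(\rho_S) \succ \rho_S$, and then invoke the monotone convergence theorem partial sum by partial sum. First I would observe that, because $B$ reapplies the passivization $\mathcal{P}$ at the end of every step, the iterate $\rho_S^{(n)} := B^n(\rho_S)$ is passive for all $n$; in particular $\mathcal{P}(\rho_S^{(n)}) = \rho_S^{(n)}$ and its eigenvalues $[\vec{\rho}_S^{(n)}]_0 \geq [\vec{\rho}_S^{(n)}]_1 \geq \dots$ are already arranged in decreasing order with increasing energy. This lets me track the whole protocol purely at the level of the ordered eigenvalue vector.

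The core step is to establish $\rho_S^{(n+1)} \succ \rho_S^{(n)}$. Writing $p_i := [\vec{\rho}_S^{(n)}]_i$, I would compute the effect of the single swap $U_{\bar{k}}$ on the target: tensoring with $\tau_M$ gives the joint populations $p_i [\vec{\tau}_M]_j$, and exchanging $\ket{E_{\bar{k}-1}\mathcal{E}_\text{max}}$ with $\ket{E_{\bar{k}}\mathcal{E}_0}$ changes only the two target populations, $p_{\bar{k}-1} \mapsto p_{\bar{k}-1} + \Delta_{\bar{k}}$ and $p_{\bar{k}} \mapsto p_{\bar{k}} - \Delta_{\bar{k}}$, with $\Delta_{\bar{k}} = p_{\bar{k}}[\vec{\tau}_M]_0 - p_{\bar{k}-1}[\vec{\tau}_M]_{d_M-1}$ exactly as in the protocol. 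When $\bar{k} \neq 0$ we have $\Delta_{\bar{k}} > 0$ by definition, while passivity of the input forces $p_{\bar{k}-1} \geq p_{\bar{k}}$; hence the update transfers weight $\Delta_{\bar{k}}$ from the smaller entry $p_{\bar{k}}$ to the larger entry $p_{\bar{k}-1}$. Such a transfer (the inverse of a Robin-Hood $T$-transform) produces a vector that majorizes the original, and since majorization is invariant under the reordering carried out by the final $\mathcal{P}$, it follows that $\rho_S^{(n+1)} \succ \rho_S^{(n)}$. When $\bar{k} = 0$ one has $U_0 = \mathds{1}_{SM}$, so $\rho_S^{(n+1)} = \rho_S^{(n)}$ and the relation holds trivially.

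To conclude, I would note that $\rho_S^{(n+1)} \succ \rho_S^{(n)}$ means that for each $k \in \{1,\dots,d_S\}$ the partial sum $S_k^{(n)} = \sum_{i=0}^{k-1}[\vec{\rho}_S^{(n)}]_i$ is nondecreasing in $n$ and bounded above by $1$, so by the monotone convergence theorem each $S_k^{(n)}$ converges. Since $[\vec{\rho}_S^{(n)}]_k = S_{k+1}^{(n)} - S_k^{(n)}$, the full ordered eigenvalue vector converges, and because the iterate is always passive (diagonal in the energy eigenbasis with decreasing populations) the state $\rho_S^{(n)}$ itself converges. I expect the main obstacle to be the core step: one must verify carefully that a single population transfer from a smaller to a larger eigenvalue genuinely increases the majorization order \emph{even after} the subsequent resorting by $\mathcal{P}$, and dispose cleanly of the boundary case $\bar{k}=0$; once $\rho_S^{(n+1)} \succ \rho_S^{(n)}$ is in hand, the monotone-convergence argument is routine. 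Note that this lemma only establishes convergence (point 1), leaving the identification of the limit as $\rho_S^*$ (points 2 and 3) to the subsequent arguments.
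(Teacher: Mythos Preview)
Your proposal is correct and follows essentially the same route as the paper: both argue that the swap moves population from a smaller eigenvalue to a larger one (your ``inverse Robin-Hood'' step), hence each application of $B$ only increases the target in majorization order, so the ordered partial sums are monotone and bounded and therefore converge. Your write-up is in fact slightly more careful than the appendix version, since you explicitly note that the final passivization $\mathcal{P}$ does not affect the majorization relation even if the post-swap vector is no longer sorted.
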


\begin{proof}
The unitary $U_i$ in the coherent max-swap protocol corresponds to swapping the pair of levels $\{i-1,i\}$ of the system with the maximum energy gap of the machine and the way $\bar{k}$ is chosen, if $\bar{k} \neq 0$, ensures that this swap increases the ($\bar{k}-1$) population of the system. It thus follows that for each $k=1,\dots,d_S-1$, $(\sum_k^{(n)})_{n \in \mathds{N}}$, where $\sum_k^{(n)}$ is the sum of the k greatest eigenvalues of $\rho_S^{(n)}:=B^n(\rho_S)$, is monotonically increasing. As these sequences are each bounded by 1, by the monotone convergence theorem, they respectively converge to $\sum_k^{(\infty)}$. The protocol thus converges to the state
\begin{equation}
 \rho_S^{(\infty)} = \sum_{k=0}^{d_S-1} \underbrace{(\Sigma_{k+1}^{(\infty)}-\Sigma_{k}^{(\infty)})}_{\lambda_k} \ket{E_k} \bra{E_k}.
\end{equation}
\end{proof}

\begin{lemma}\label{lemma:cohoptconv}
The optimal coherent protocol converges.
\end{lemma}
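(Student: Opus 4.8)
The plan is to mirror the structure of Lemma \ref{lemma:cohmaxconv}: show that, for each $k$, the $k^{\text{th}}$ partial-sum sequence in the number of cycles is monotonically non-decreasing and bounded above by one, and then invoke the monotone convergence theorem. Writing $\rho_S^{(n)} := A^n(\rho_S)$ and $\Sigma_k^{(n)} := \sum_{i=0}^{k-1} [\vec{\rho}_S^{(n)}]_i$ for the $k^{\text{th}}$ partial sum of its ordered eigenvalues, the whole statement reduces to establishing the single-step majorization $\rho_S^{(n)} \succ \rho_S^{(n-1)}$ for every $n$, since majorization is exactly the statement that all partial sums weakly increase.

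First I would prove that one application of $A$ never decreases any ordered partial sum, i.e. $A(\sigma_S) \succ \sigma_S$ for an arbitrary system state $\sigma_S$. By the Schur-Horn step already used in the proof of Claim \ref{theorem:coherentmajorization}, the $k^{\text{th}}$ ordered partial sum of the reduced state produced by \emph{any} joint unitary on $\sigma_S \otimes \tau_M(\beta_R)$ is bounded above by $\Sigma_k$, the sum of the $k \cdot d_M$ largest joint eigenvalues. The reordering $U_{\text{opt}}$ of Eq.~\eqref{eq:optcohunitary} assigns to system level $i$ the block of joint eigenvalues with indices $i\cdot d_M,\dots,(i+1)d_M-1$, so the first $k$ system levels collectively receive precisely the $k\cdot d_M$ largest joint eigenvalues; consequently the reduced diagonal is already in decreasing order and its $k^{\text{th}}$ partial sum equals $\Sigma_k$, saturating the bound simultaneously for every $k$. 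Since the identity $\mathds{1}_{SM}$ is itself an admissible coherent unitary and returns $\text{Tr}_M(\sigma_S \otimes \tau_M(\beta_R)) = \sigma_S$, whose partial sums are (weakly) below $\Sigma_k$, the optimal output $A(\sigma_S)$ majorizes $\sigma_S$, as claimed.

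Applying this with $\sigma_S = \rho_S^{(n-1)}$ yields $\Sigma_k^{(n)} \geq \Sigma_k^{(n-1)}$ for all $k$ and $n$, so each of the $d_S-1$ nontrivial sequences $(\Sigma_k^{(n)})_{n\in\mathds{N}}$ is monotonically non-decreasing; being partial sums of a probability vector, they are bounded above by one. By the monotone convergence theorem each converges to a limit $\Sigma_k^{(\infty)}$, and the differences $\lambda_k := \Sigma_{k+1}^{(\infty)} - \Sigma_k^{(\infty)}$ furnish the limiting ordered eigenvalue vector, so $\rho_S^{(n)}$ converges to the passive state $\sum_{k=0}^{d_S-1} \lambda_k \ket{E_k}\bra{E_k}$, exactly as in Lemma \ref{lemma:cohmaxconv}.

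The only place that needs genuine care is the second paragraph, namely that $U_{\text{opt}}$ saturates the Schur-Horn bound \emph{simultaneously for all $k$} and that the identity is a legitimate competitor in the same maximization; this is the conceptual content, and it rests on the block structure of the sorted joint spectrum rather than on any new estimate. I expect this bookkeeping to be the main (and essentially the sole) obstacle, but it is already implicit in the proof of Claim \ref{theorem:coherentmajorization}; once it is in hand, the convergence argument is identical to that of the coherent max-swap protocol and requires nothing beyond the monotone convergence theorem.
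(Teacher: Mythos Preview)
Your proposal is correct and follows essentially the same approach as the paper: the paper's own proof is a one-line remark that the argument is ``completely analogous to that of Lemma~\ref{lemma:cohmaxconv} since the unitary $U_{\text{opt}}$ also only increases the sum of the $k$ greatest eigenvalues,'' and your write-up simply unpacks that assertion by exhibiting the Schur--Horn saturation and the identity-as-competitor step before invoking monotone convergence.
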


\begin{proof}
The proof is completely analogous to that of Lemma \ref{lemma:cohmaxconv} since the unitary $U_{\text{opt}}$ also only increases the sum of the k greatest eigenvalues of $A^{n}(\rho_S)$. 
\end{proof}
Before we go on we would like to state a corollary that will be useful later
\begin{corollary}\label{cor:Bmaj}
Given any system state $\rho_S$,
\begin{align}
    B^n(\rho_S) \succ B^m(\rho_S) \quad \text{and} \quad  A^n(\rho_S) \succ A^m(\rho_S) \quad \forall n \geq m,\; n,m \in \{0, 1, \dots, \infty\},
\end{align}
where $A^0(\rho_S)=B^0(\rho_S) := \rho_S$.
\end{corollary}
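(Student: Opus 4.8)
The plan is to observe that this corollary is essentially a repackaging of the monotonicity already established in Lemmas \ref{lemma:cohmaxconv} and \ref{lemma:cohoptconv}, phrased through the standard ordered-partial-sum characterisation of majorization. Recall that for two probability vectors of equal total weight, $\vec{x} \succ \vec{y}$ holds precisely when every ordered partial sum of $\vec{x}$ dominates the corresponding partial sum of $\vec{y}$. Since all the states appearing here are normalised density matrices, the total-weight condition (equality at $k=d_S$) is automatic, and only the partial-sum inequalities remain to be verified.

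First I would record the single-step statement $B(\sigma_S) \succ \sigma_S$, and likewise $A(\sigma_S) \succ \sigma_S$, for an arbitrary system state $\sigma_S$. This is exactly what the proofs of Lemmas \ref{lemma:cohmaxconv} and \ref{lemma:cohoptconv} supply: passing to the passive state $\mathcal{P}(\sigma_S)$ leaves the eigenvalues, and hence every ordered partial sum, unchanged, while the selected swap $U_{\bar{k}}$ moves population from a smaller diagonal entry to a larger one whenever $\bar{k}\neq 0$ (and is the identity when $\bar{k}=0$), so no ordered partial sum can decrease. For the optimal protocol the same conclusion follows because $U_{\text{opt}}$ arranges the joint eigenvalues so as to maximise every partial sum of the reduced target state.

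Next I would iterate. Because majorization is transitive on probability vectors, the chain $B^{n}(\rho_S) \succ B^{n-1}(\rho_S) \succ \dots \succ B^{m}(\rho_S)$ telescopes to $B^{n}(\rho_S) \succ B^{m}(\rho_S)$ for all finite $n \geq m$, and identically for $A$. This settles the statement for $n,m \in \{0,1,\dots\}$, with the base case $n=m$ being trivial self-majorization.

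Finally, to include the endpoint $\infty$, I would take limits. Lemmas \ref{lemma:cohmaxconv} and \ref{lemma:cohoptconv} guarantee that each ordered partial sum $\Sigma_k^{(n)}$ converges monotonically to $\Sigma_k^{(\infty)}$; letting $n \to \infty$ in the non-strict inequalities $\Sigma_k^{(n)} \geq \Sigma_k^{(m)}$ preserves them, yielding $B^{\infty}(\rho_S) \succ B^{m}(\rho_S)$ for every finite $m$, and the case $m=n=\infty$ is again trivial. I do not anticipate any genuine obstacle here: the only points requiring care are to invoke the partial-sum characterisation of majorization with the total weight pinned to one, and to note that non-strict majorization inequalities are stable under the limits furnished by the convergence lemmas.
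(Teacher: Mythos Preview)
Your proposal is correct and follows essentially the same approach as the paper: the paper's proof is the one-liner ``this is a direct consequence of the fact that $(\Sigma_k^{(n)})_{n\in\mathbb{N}}$ are monotonically increasing for both protocols,'' and your argument simply unpacks that remark into the single-step majorization, transitivity, and the limit case. The extra care you take with the $n=\infty$ endpoint is a harmless elaboration of what the paper leaves implicit.
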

\begin{proof}
This is a direct consequence of the fact that $(\sum_k^{(n)})_{n \in \mathbb{N}}$ are monotonically increasing for both protocols.
\end{proof}

In order to distinguish the a priori different converging points of both protocols, we will in the following denote the converging point of the optimal coherent protocol by $A^{\infty}(\rho_S)$ and the converging point of the coherent max-swap protocol by $B^{\infty}(\rho_S)$. We next turn to proving point 2. for the coherent max-swap protocol. To say that $B^{\infty}(\rho_S)$ is a fixed point is the formal way of stating that the converging point cannot be cooled further. This may at first sight seem obvious but there are protocols for which this fails to be true, for instance if the protocol is a discontinuous function at the converging point. Since our protocol includes the unitary  $U_{\bar{k}}$ which is state dependent, a continuity proof is not straightforward. We therefore directly prove that the convergent point is a fixed point.

\begin{lemma}\label{lemma:fixpointcohmax}
For any initial state $\rho_S$, $B^{\infty}(\rho_S)$ is a fixed point of the coherent max-swap protocol.
\end{lemma}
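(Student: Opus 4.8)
The plan is to characterize the fixed points of the max-swap map $B$ explicitly through the quantities $\Delta_i$, and then to exclude a non-fixed convergent point by showing it would force some partial sum to keep increasing by a bounded amount each step, contradicting the convergence established in Lemma~\ref{lemma:cohmaxconv}. Write $\sigma := B^\infty(\rho_S)$ and $\Delta_i(\rho) := [\vec{\rho}]_i [\vec{\tau}_M]_0 - [\vec{\rho}]_{i-1}[\vec{\tau}_M]_{d_M-1}$ for the selection quantity evaluated on a state $\rho$ (using its ordered eigenvalues). First I would record that each $\rho_S^{(n)} := B^n(\rho_S)$ is passive, being the output of $\mathcal{P}$, and that passivity is closed under limits, so $\sigma$ is passive and $\mathcal{P}(\sigma)=\sigma$; its ordered eigenvalues are $[\vec{\sigma}]_k = \Sigma_{k+1}^{(\infty)} - \Sigma_k^{(\infty)} = \lim_n [\vec{\rho}_S^{(n)}]_k$. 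I then claim the characterization: $\sigma$ is a fixed point of $B$ if and only if $\max_i \Delta_i(\sigma) \le 0$. Indeed, if all $\Delta_i(\sigma) \le 0$ then $\bar{k}=0$, $U_0 = \mathds{1}_{SM}$, and $B(\sigma) = \mathcal{P}(\text{Tr}_M[\sigma \otimes \tau_M]) = \mathcal{P}(\sigma) = \sigma$. Conversely, if some $\Delta_i(\sigma) > 0$ then $\bar{k} \ne 0$, and a direct population-bookkeeping computation (the swap $U_{\bar{k}}$ moves weight $\Delta_{\bar{k}}(\sigma)$ from the level $\bar{k}$ up to the level $\bar{k}-1$, and the two affected eigenvalues stay on the correct side of the reordering) shows the $\bar{k}$-th partial sum strictly increases, whence $B(\sigma) \ne \sigma$. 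It therefore suffices to prove $\max_i \Delta_i(\sigma) \le 0$.

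The main obstacle, flagged in the preamble to this lemma, is that $B$ is not obviously continuous at $\sigma$: the index $\bar{k}$ is a state-dependent $\argmax$ that can jump (and can switch to $0$ as $\max_i \Delta_i$ crosses zero), so I cannot simply pass $B(\rho_S^{(n)}) \to B(\sigma)$ to the limit. I would instead argue by contradiction, suppose $\delta := \max_i \Delta_i(\sigma) > 0$. Each $\Delta_i$ is a fixed linear function of the ordered eigenvalues, the coefficients $[\vec{\tau}_M]_0$ and $[\vec{\tau}_M]_{d_M-1}$ being constants, and these eigenvalues converge; hence $\Delta_i(\rho_S^{(n)}) \to \Delta_i(\sigma)$ and, taking the maximum over the finite index set, $\max_i \Delta_i(\rho_S^{(n)}) > \delta/2$ for all large $n$. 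For such $n$ the selected index $\bar{k}_n \in \{1,\dots,d_S-1\}$ satisfies $\Delta_{\bar{k}_n}(\rho_S^{(n)}) = \max_i \Delta_i(\rho_S^{(n)}) > \delta/2$, so the same bookkeeping gives $\Sigma_{\bar{k}_n}^{(n+1)} - \Sigma_{\bar{k}_n}^{(n)} = \Delta_{\bar{k}_n}(\rho_S^{(n)}) > \delta/2$.

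Finally I would close the contradiction using only that all partial sums are monotone increasing and bounded by one (Lemma~\ref{lemma:cohmaxconv}). Summing the non-negative per-step increments over the finitely many indices $k \in \{1,\dots,d_S-1\}$, the series $\sum_n \sum_{k=1}^{d_S-1}(\Sigma_k^{(n+1)} - \Sigma_k^{(n)})$ telescopes in $n$ to $\sum_k (\Sigma_k^{(\infty)} - \Sigma_k^{(0)}) < \infty$, so its summand tends to $0$; but the estimate above forces $\sum_{k}(\Sigma_k^{(n+1)} - \Sigma_k^{(n)}) \ge \Sigma_{\bar{k}_n}^{(n+1)} - \Sigma_{\bar{k}_n}^{(n)} > \delta/2$ for all large $n$, a contradiction. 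Hence $\delta \le 0$, and by the characterization $\sigma = B^\infty(\rho_S)$ is a fixed point of the coherent max-swap protocol. I expect the only genuinely delicate point to be handling the discontinuity of $B$ cleanly, which is exactly what the monotone-convergence/summability step resolves; the population bookkeeping and the $\bar{k}=0$ base case are routine.
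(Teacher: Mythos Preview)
Your proof is correct and follows essentially the same contradiction strategy as the paper: assume the limit $\sigma$ is not a fixed point (i.e.\ $\max_i\Delta_i(\sigma)=\delta>0$), transfer this to nearby iterates to get a uniform lower bound on the per-step increase of some partial sum, and contradict convergence. The only cosmetic difference is in how the contradiction is closed---the paper picks a single $n$ and a carefully tuned $\epsilon$ so that the $\bar{k}_n$-th partial sum of $B^{n+1}(\rho_S)$ strictly overshoots that of $B^\infty(\rho_S)$ (violating $B^{n+1}(\rho_S)\prec B^\infty(\rho_S)$), whereas you use the equivalent observation that the telescoping sum of all non-negative increments is finite and hence cannot have terms bounded below by $\delta/2$ infinitely often.
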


\begin{proof}
The proof is by contradiction. Assume that $B^{\infty}(\rho_S)$ is not a fixpoint. Looking at the definition of the coherent max-swap protocol one sees that applying B to $B^{\infty}(\rho_S)$ amounts to selecting the index $\bar{k}_{\infty} \in \{ 1, \dots, d_S-1\}$ and applying the unitary $U_{\bar{k}_{\infty}}$ to $B^{\infty}(\rho_S)$, moving the population $\Delta_{\bar{k}_{\infty}}$ from the energy level $\bar{k}_{\infty}$ to $\bar{k}_{\infty}-1$. That $B^{\infty}(\rho_S)$ is not a fixed point means that there exists some real number \(\delta > 0\) such that 
\begin{equation}
\Delta_{\bar{k}_{\infty}}^{(\infty)} = \delta.
\end{equation}

Let $\epsilon := \frac{\delta}{d_S+[\vec{\tau}_M]_0+[\vec{\tau}_M]_{d_M-1}}$. As $\lim_{n \rightarrow \infty} B^n(\rho_S) = B^{\infty}(\rho_S)$ there exists an $n \in \mathds{N}$ for which

\begin{equation}\label{eqs:epsilonball}
    \left \lVert \vec{B}^n(\rho_S)- \vec{B}^{\infty}(\rho_S) \right \rVert_{\text{max}} := \max_i \left \lvert [\vec{B}^n(\rho_S)]_i- [\vec{B}^{\infty}(\rho_S)]_i \right \rvert < \epsilon.
\end{equation}

Convergence happens with respect to any metric since we are in finite dimension, and we picked the max metric out of convenience. From the above follows that 
\begin{align}
[\vec{B}^n(\rho_S)]_{\bar{k}_{\infty}} &> [\vec{B}^{\infty}(\rho_S)]_{\bar{k}_{\infty}} - \epsilon,\label{eqs:epsilon-}\\
-[\vec{B}^n(\rho_S)]_{\bar{k}_{\infty}-1} &>-\left( [\vec{B}^{\infty}(\rho_S)]_{\bar{k}_{\infty}-1} + \epsilon \right).\label{eqs:epsilon+}
\end{align}

From Eqs.~\ref{eqs:epsilonball} also follows that for any $k \in \{0, 1, \dots, d_S-1\}$
\begin{equation}
     \left \lvert \sum_{i=0}^k [\vec{B}^n(\rho_S)]_i- \sum_{i=0}^k [\vec{B}^{\infty}(\rho_S)]_i \right \rvert  \leq \sum_{i=0}^k \left \lvert  [\vec{B}^n(\rho_S)]_i- [\vec{B}^{\infty}(\rho_S)]_i \right \rvert < (k+1) \epsilon \leq d_S \epsilon
\end{equation}

from which we have

\begin{equation}\label{eqs:epsilonds}
    \sum_{i=0}^k [\vec{B}^n(\rho_S)]_i +  d_S \epsilon > \sum_{i=0}^k [ \vec{B}^{\infty} (\rho_S)]_i.
\end{equation}

Now we look at the application of B to $B^n(\rho_S)$. As before, this amounts to selecting the index $\bar{k}_n$ and applying a unitary that will move $\Delta_{\bar{k}_n}$ population fom the energy level $\bar{k}_n$ to $\bar{k}_n-1$. Let $\delta'$ be the amount of population moved from $\bar{k}_{\infty}$ to $\bar{k}_{\infty}-1$ when applying $U_{\bar{k}_{\infty}}$ to $B^n(\rho_S)$, i.e.

\begin{equation}
    \delta':= [\vec{B}^n(\rho_S)]_{\bar{k}_{\infty}} [\vec{\tau}_M]_0 - [\vec{B}^n(\rho_S)]_{\bar{k}_{\infty}-1} [\vec{\tau}_M]_{d_M-1}.
\end{equation}

By definition of $\bar{k}_n$ we have $ \Delta_{\bar{k}_n} \geq \delta'$. Furthermore using Eqs.~\ref{eqs:epsilon-} and Eqs.~\ref{eqs:epsilon+} we find

\begin{align}
    \delta'&\geq \left( [\vec{B}^{\infty}(\rho_S)]_{\bar{k}_{\infty}}-\epsilon \right) [\vec{\tau}_M]_0 - \left( [\vec{B}^{\infty}(\rho_S)]_{\bar{k}_{\infty}-1}+\epsilon \right) [\vec{\tau}_M]_{d_M-1}\\
    &= \delta - \epsilon \left([\vec{\tau}_M]_0+ [\vec{\tau}_M]_{d_M-1} \right)\\
    &= \frac{ \delta d_S}{d_S+[\vec{\tau}_M]_0+[\vec{\tau}_M]_{d_M-1}}=d_S \epsilon.
\end{align}
We therefore have
\begin{equation}\label{eqs:Deltaepsilon}
    \Delta_{\bar{k}_n} \geq \delta' \geq d_S \epsilon.
\end{equation}

This brings us to our desired contradiction as

\begin{align}
    \sum_{i=0}^{\bar{k}_n} [ B^{n+1}(\rho_S)]_i &\stackrel{\text{Def.}}{=} \sum_{i=0}^{\bar{k}_n} [B^n (\rho_S)]_i + \Delta_{\bar{k}_n}\\
    &\stackrel{(\ref{eqs:Deltaepsilon})}{\geq} \sum_{i=0}^{\bar{k}_n} [B^n (\rho_S)]_i + \epsilon d_S\\
    &\stackrel{(\ref{eqs:epsilonds})}{>} \sum_{i=0}^{\bar{k}_n} [B^{\infty}(\rho_S)]_i,
\end{align}

which is a contraction to $B^{n+1} (\rho_S) \prec B^{\infty}(\rho_S)$ (Corollary~\ref{cor:Bmaj}).
\end{proof}

Next we would like to prove what the name of the optimal coherent protocol suggests, namely that it cools more than any other protocol. This is the crucial property of the optimal coherent protocol that will help us proving it converge to $\rho_S^*$.

\begin{lemma}\label{lemma:optimalisoptimal}

The optimal coherent protocol cools more than any other choice of coherent operations. That is,
\begin{equation}
    A \circ A \circ ... \circ A (\rho_S) = A^{\otimes n} (\rho_S) \succ C_n \circ C_{n-1} \circ ... \circ C_1 (\rho_S),
\end{equation}
for all $n$ and arbitrary coherent operations $C_i$. (By $\rho_1 \succ \rho_2$ we mean that $\vec{\rho}_1 \succ \vec{\rho}_2$.)

\end{lemma}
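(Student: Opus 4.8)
The plan is to establish the majorization claim inductively on the number of steps $n$, reducing everything to a single-step statement that compares one application of the optimal protocol map $A$ against one application of an arbitrary coherent operation $C$, applied to the \emph{same} input state. The base case $n=1$ is the key single-step lemma, and the inductive step will combine it with the monotonicity already recorded in Corollary~\ref{cor:Bmaj}. First I would set up the induction hypothesis $A^{\otimes(n-1)}(\rho_S) \succ C_{n-1} \circ \cdots \circ C_1(\rho_S) =: \sigma_S$, so that the problem becomes showing $A^{\otimes n}(\rho_S) = A(A^{\otimes(n-1)}(\rho_S)) \succ C_n(\sigma_S)$.

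The heart of the argument is the single-step claim: for any two system states $\alpha_S \succ \sigma_S$, one has $A(\alpha_S) \succ C(\sigma_S)$ for every coherent operation $C$. I would prove this in two pieces. First, $A$ is \emph{monotone} with respect to majorization: if $\alpha_S \succ \sigma_S$ then $A(\alpha_S) \succ A(\sigma_S)$. This should follow because $A$ acts by tensoring with the fixed thermal state $\tau_M(\beta_R)$ and then applying the eigenvalue-sorting unitary $U_{\text{opt}}$; tensoring with a fixed state preserves majorization (Corollary~1.2 of Ref.~\cite{bondar}), and the partial tracing followed by optimal reordering can only respect the majorization order since $A(\cdot)$ outputs the passive state whose partial sums are maximal. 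Second, $A(\sigma_S) \succ C(\sigma_S)$ for the \emph{same} input $\sigma_S$: this is precisely the defining optimality property of $U_{\text{opt}}$ established in the proof of Claim~\ref{theorem:coherentmajorization}, namely that for each $k$ the $k$-th partial sum $\Sigma_k$ of the joint diagonal is maximized by sorting all eigenvalues of $\sigma_S \otimes \tau_M(\beta_R)$ in decreasing order into the energy slots, which is exactly what $U_{\text{opt}}$ does; any other unitary $U_{C}$ produces a joint diagonal majorized by the sorted one (Schur--Horn), hence a smaller reduced partial sum. Chaining these, $A(\alpha_S) \succ A(\sigma_S) \succ C(\sigma_S)$, gives the single-step claim. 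Applying it with $\alpha_S = A^{\otimes(n-1)}(\rho_S)$, $\sigma_S = C_{n-1}\circ\cdots\circ C_1(\rho_S)$, and $C = C_n$ closes the induction.

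The main obstacle I anticipate is the monotonicity of $A$, i.e.\ the step $\alpha_S \succ \sigma_S \Rightarrow A(\alpha_S) \succ A(\sigma_S)$. The subtlety is that $A$ is not simply a fixed unitary conjugation followed by partial trace: it first sorts eigenvalues and then traces out the machine, and the optimal reordering depends on the input. Tensoring with the fixed $\tau_M$ preserves the order, so $\alpha_S \otimes \tau_M \succ \sigma_S \otimes \tau_M$; the remaining task is to show that the operation ``sort into energy slots, then partial-trace the machine'' is majorization-monotone. I would argue this by showing that for each $k$ the reduced $k$-th partial sum of $A(\cdot)$ equals the largest-$(k\,d_M)$ eigenvalue sum of the joint input (since $U_{\text{opt}}$ places the $k\,d_M$ largest eigenvalues into the slots contributing to the top $k$ target levels), and that this quantity is itself monotone in the majorization order of the joint input — which holds because ``sum of the $m$ largest eigenvalues'' is a Schur-convex functional and is therefore order-preserving under $\succ$. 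This reduces the monotonicity of $A$ to a standard property of partial sums, at which point the whole argument is routine.
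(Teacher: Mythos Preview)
Your proposal is correct and follows essentially the same route as the paper: induction on $n$, reduced to the single-step implication $\alpha_S \succ \sigma_S \Rightarrow A(\alpha_S) \succ C(\sigma_S)$, which is in turn split into monotonicity $A(\alpha_S) \succ A(\sigma_S)$ (via tensor-stability of majorization and the identification of the $k$th partial sum of $A(\cdot)$ with the top-$(k\,d_M)$ eigenvalue sum of the joint state) and single-step optimality $A(\sigma_S) \succ C(\sigma_S)$ (via Schur--Horn). One small correction: Corollary~\ref{cor:Bmaj} only gives $A^n(\rho_S) \succ A^m(\rho_S)$ for the \emph{same} initial state under iteration, not the monotonicity $\alpha_S \succ \sigma_S \Rightarrow A(\alpha_S) \succ A(\sigma_S)$ you need here---but you go on to prove that monotonicity directly, so the reference is superfluous rather than an error.
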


\begin{proof}

We prove the statement by induction. We start with the base case $n=1$, i.e. a single coherent operation. From the argument in section A the largest possible value of the partial sum of the $k$ largest eigenvalues of the final state $\rho_S^\prime$ is the sum of the $k*d_M$ largest eigenvalues of the joint state $\rho_{SM}$.

Consider a single application of the optimal coherent protocol, \eqref{eq:optcohunitary}. Tracing out the machine to find the final state of the system, it is diagonal in the energy eigenbasis,
\begin{align}
    \rho^\prime_S &= \sum_{i=0}^{d_S-1} \left( \sum_{j=0}^{d_M-1} [\vec{\rho}_{SM}]_{i \cdot d_M +j} \right) \ket{E_i} \bra{E_i}
\end{align}

Since $\vec{\rho}_{SM}$ is the vector of eigenvalues of the joint state arranged in decreasing order, it follows that the ground state population of the final state of the system, i.e. $\braket{E_0 | \rho_S^\prime | E_0}$, is composed of the sum of the largest $d_M$ eigenvalues of $\rho_{SM}$, and the first excited state is the sum of the next largest eigenvalues of $\rho_{SM}$, and so on. Thus the partial sum of the largest $k$ eigenvalues of $\rho_S^\prime$ is the sum of the $k*d_M$ largest eigenvalues of the joint state. This concludes the proof for $n=1$.

Next we prove that if $\rho_1 \succ \rho_2$, then $A(\rho_1) \succ C(\rho_2)$, for an arbitrary coherent operation $C$. Firstly, if $\rho_1 \succ \rho_2$, then it is also true that $\rho_1 \otimes \tau_M(\beta_R) \succ \rho_2 \otimes \tau_M(\beta_R)$. But then the sum of the largest $k*d_M$ eigenvalues of $\rho_1 \otimes \tau_M(\beta_R)$ is larger than (or equal to) that in the second case. Thus $A(\rho_1) \succ A(\rho_2)$. But we have proven above that $A(\rho_2) \succ C(\rho_2)$. Therefore $A(\rho_1) \succ C(\rho_2)$.

Finally, assume that $A^{\otimes n}(\rho_S) \succ C^{\otimes n}(\rho_S)$. Let $\rho_1=A^{\otimes n}(\rho_S)$ and $\rho_2=C^{\otimes n}(\rho_S)$. Then from the previous argument $A^{\otimes (n+1)}(\rho_S) \succ C^{\otimes (n+1)}(\rho_S)$, which concludes the proof.
\end{proof}

Before proving point 3. for both protocols, we need a last lemma about the fixed points of the max-swap protocol.

\begin{lemma}\label{lemma:fixpointmajmaxcoh}
All of the fixed points of the max-swap protocol have the property of majorizing $\rho^*$.
\end{lemma}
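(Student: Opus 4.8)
The plan is to characterise the fixed points explicitly as those passive states whose consecutive eigenvalue ratios are bounded by $g=e^{-\beta_R\mathcal{E}_\text{max}}$, and then to show by a monotone-ratio argument that any such state majorises $\rho_S^*$.

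\textbf{Step 1 (fixed-point characterisation).} First I would show that a passive state $\rho_S$ can be a fixed point only if the protocol selects $\bar{k}=0$, i.e.\ only if $\max_i\Delta_i\le0$. Suppose instead that some $\Delta_{\bar{k}}>0$. Applying $U_{\bar{k}}$ transfers population $\Delta_{\bar{k}}$ from level $E_{\bar{k}}$ to level $E_{\bar{k}-1}$, which strictly increases the $\bar{k}$-th partial sum $\sum_{i=0}^{\bar{k}-1}[\vec{\rho}_S]_i$; since by Corollary~\ref{cor:Bmaj} no partial sum can decrease under $B$, the eigenvalue vector genuinely changes and $B(\rho_S)\ne\rho_S$ (the outer $\mathcal{P}$ only reorders and cannot restore the original partial sums). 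Hence at a fixed point $\Delta_i\le0$ for all $i$, i.e.
\begin{equation}
[\vec{\rho}_S]_i\,[\vec{\tau}_M]_0 \;\le\; [\vec{\rho}_S]_{i-1}\,[\vec{\tau}_M]_{d_M-1}, \qquad i=1,\dots,d_S-1.
\end{equation}
Since $\tau_M$ is thermal with smallest gap $0$ and largest gap $\mathcal{E}_\text{max}$, one has $[\vec{\tau}_M]_{d_M-1}/[\vec{\tau}_M]_0=e^{-\beta_R\mathcal{E}_\text{max}}=g$, so the condition becomes $[\vec{\rho}_S]_i/[\vec{\rho}_S]_{i-1}\le g$ for all $i$, whereas by Eq.~\eqref{app:gratio} the state $\rho_S^*$ saturates it with ratio exactly $g$.

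\textbf{Step 2 (ratio-dominance implies majorisation).} Writing $p_i=[\vec{\rho}_S]_i$ and $q_i=[\vec{\rho}_S^*]_i=g^i/\mathcal{Z}$, I would form the ratios $r_i:=p_i/q_i$ and observe that
\begin{equation}
\frac{r_i}{r_{i-1}}=\frac{p_i/p_{i-1}}{q_i/q_{i-1}}=\frac{1}{g}\,\frac{p_i}{p_{i-1}}\le1,
\end{equation}
so $(r_i)$ is non-increasing. As both $\vec{p}$ and $\vec{q}$ are normalised, $\sum_i(r_i-1)q_i=0$ with $q_i>0$; a non-increasing sequence with vanishing positive-weighted mean must satisfy $r_i-1\ge0$ for $i<m$ and $r_i-1\le0$ for $i\ge m$, for some threshold $m$ (single crossing). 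Setting $S_k:=\sum_{i=0}^{k-1}(p_i-q_i)=\sum_{i=0}^{k-1}(r_i-1)q_i$, for $k\le m$ all summands are nonnegative so $S_k\ge0$, while for $k>m$ we use $S_{d_S}=0$ to write $S_k=\sum_{i=k}^{d_S-1}(1-r_i)q_i\ge0$. Thus every partial sum of $\vec{\rho}_S$ dominates that of $\vec{\rho}_S^*$, giving $\rho_S\succ\rho_S^*$ as claimed.

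\textbf{Main obstacle.} I expect the only delicate point to be Step~1: one must be sure that wrapping the swap in the passification map $\mathcal{P}$ cannot ``undo'' a beneficial move and accidentally produce a fixed point with $\max_i\Delta_i>0$. This is precisely what the monotonicity of all partial sums (Corollary~\ref{cor:Bmaj}) rules out, since a strict increase of the $\bar{k}$-th partial sum cannot be cancelled by a subsequent reordering. Once the fixed points are identified with the ratio condition $[\vec{\rho}_S]_i/[\vec{\rho}_S]_{i-1}\le g$, Step~2 is the standard single-crossing comparison and is routine.
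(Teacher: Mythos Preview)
Your proof is correct. Step~1 matches the paper's opening move exactly: the fixed points are precisely those with $\Delta_i\le0$ for every $i$, which is the consecutive-ratio bound $[\vec{\rho}_f]_i/[\vec{\rho}_f]_{i-1}\le g$. Your handling of the passification map $\mathcal{P}$ is more careful than the paper's, which simply asserts the characterisation ``from the definition''.

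Step~2 is where you diverge. The paper telescopes the ratio bound to $[\vec{\rho}_f]_i/[\vec{\rho}_f]_{i-m}\le[\vec{\rho}_S^*]_i/[\vec{\rho}_S^*]_{i-m}$, then uses this to show that the tail-to-head ratio $\big(\sum_{i\ge k}[\vec{\rho}_f]_i\big)/\big(\sum_{j<k}[\vec{\rho}_f]_j\big)$ is dominated by the corresponding ratio for $\rho_S^*$, and finally combines with normalisation to extract the partial-sum inequality. Your route via the non-increasing sequence $r_i=p_i/q_i$ and the single-crossing argument is a different, and arguably cleaner, decomposition: it makes the structural reason for majorisation (exactly one sign change of $r_i-1$) explicit, avoids the paper's somewhat fiddly double-sum manipulation, and is the textbook ``ratio ordering implies majorisation'' lemma. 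The paper's approach, on the other hand, stays closer to direct inequalities on the partial sums themselves and needs no auxiliary sequence. Both arguments are of comparable length and rely on the same two inputs (the ratio bound and normalisation).
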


\begin{proof}

From the definition of the max-swap protocol, the fixed points of the protocol are those for which $\Delta_i \leq  0 $ for all $i$. More precisely, denoting the fixed point by $\rho_f$,
\begin{align} 
    \frac{ [\vec{\rho}_f]_{i} }{ [\vec{\rho}_f]_{i-1} } \leq \frac{ [\vec{\tau}_M]_{d_M-1} }{ [\vec{\tau}_M]_0 } = g = \frac{ [\vec{\rho}^*_S]_{i} }{ [\vec{\rho}^*_S]_{i-1} },
\end{align}
using the definition of $\rho^*_S$ from Theorem \ref{thm:unibound} of the main text. By repeated application of the above inequality, we get that
\begin{equation}
    \frac{ [\vec{\rho}_f]_{i} }{ [\vec{\rho}_f]_{i-m} } = \frac{ [\vec{\rho}_f]_{i} }{ [\vec{\rho}_f]_{i-1} } \frac{ [\vec{\rho}_f]_{i-1} }{ [\vec{\rho}_f]_{i-2} } \dots \frac{ [\vec{\rho}_f]_{i-m+1} }{ [\vec{\rho}_f]_{i-m} }\leq \frac{ [\vec{\rho}^*_S]_{i} }{ [\vec{\rho}^*_S]_{i-m} } \forall m > 0.
\end{equation}

Next consider the following ratio of partial sums, for $k \in \{0, \dots, d_S-2\}$,

\begin{equation}
\frac{\sum_{i=k+1}^{d_S-1} [\vec{\rho}_f]_{i}}{\sum_{j=1}^{k} [\vec{\rho}_f]_{j}}= \sum_{i=k+1}^{d_S-1} \frac{[\vec{\rho}_f]_{i}}{\sum_{j=1}^{k} [\vec{\rho}_f]_{j}}= \sum_{i=k+1}^{d_S-1} \frac{1}{\sum_{j=1}^{k} \frac{[\vec{\rho}_f]_{j}}{[\vec{\rho}_f]_{i}}} \leq \sum_{i=k+1}^{d_S-1} \frac{1}{\sum_{j=1}^{k} \frac{[\vec{\rho}^*_S]_{j}}{[\vec{\rho}^*_S]_{i}}}=\frac{\sum_{i=k+1}^{d_S-1} [\vec{\rho}^*_S]_{i}}{\sum_{j=1}^{k} [\vec{\rho}^*_S]_{j}}.
\end{equation}

Finally consider the partial sum of the $k$ largest eigenvalues of $\rho_S$,
\begin{align}
    \sum_{i=0}^{k-1} [\vec{\rho}_f]_{i} &= \frac{ \sum_{i=0}^{k-1} [\vec{\rho}_f]_{i} }{ \sum_{i=0}^{d_S-1} [\vec{\rho}_f]_{i} } = \frac{ \sum_{i=0}^{k-1} [\vec{\rho}_f]_{i} }{ \sum_{i=0}^{k-1} [\vec{\rho}_f]_{i} + \sum_{i=k}^{d_S-1} [\vec{\rho}_f]_{i} }\\
    &= \frac{ 1 }{ 1 + \frac{ \sum_{i=k}^{d_S-1} [\vec{\rho}_f]_{i} }{ \sum_{i=0}^{k-1} [\vec{\rho}_f]_{i} } } \\
    &\geq \frac{ 1 }{ 1 + \frac{ \sum_{i=k}^{d_S-1} [\vec{\rho}^*_S]_{i} }{ \sum_{i=0}^{k-1} [\vec{\rho}^*_S]_{i} } } = \sum_{i=0}^{k-1} [\vec{\rho}^*_S]_{i},
\end{align}
which shows that $\rho_f \succ \rho_S^*$ as desired.
\end{proof}

We are now in a position to prove that the coherent max-swap protocol converges to the desired state.

\begin{lemma} \label{lemma:cohmaxrhostar}
If the initial state of the system \(\rho_S\) satisfies $\rho_S \prec \rho_S^*$ then the convergent point of the coherent max-swap protocol is $\rho_S^*$.
\end{lemma}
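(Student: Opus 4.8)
The plan is to sandwich the convergent point $B^{\infty}(\rho_S)$ against $\rho_S^*$ from both sides using majorization, and then invoke the antisymmetry of majorization on passive states to force equality. All the substantive machinery is already in place, so the proof is essentially an assembly of the preceding lemmas.

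First I would establish the upper bound $B^{\infty}(\rho_S) \prec \rho_S^*$. Since every step of the coherent max-swap protocol is implementable by a single joint unitary on system and machine (as noted in the definition of $B$), it is a coherent operation, so Claim \ref{theorem:coherentmajorization} applies at each step: if $\rho_S \prec \rho_S^*$ then $B(\rho_S) \prec \rho_S^*$. By induction, $B^n(\rho_S) \prec \rho_S^*$ for all $n$. Since each ordered partial sum $\sum_{i=0}^{k-1}[\vec{\rho}]_i$ is continuous in the state, these majorization inequalities pass to the limit $n\to\infty$, giving $B^{\infty}(\rho_S) \prec \rho_S^*$.

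Next I would establish the matching lower bound. By Lemma \ref{lemma:fixpointcohmax} the convergent point $B^{\infty}(\rho_S)$ is a fixed point of the protocol, and by Lemma \ref{lemma:fixpointmajmaxcoh} every such fixed point satisfies $B^{\infty}(\rho_S) \succ \rho_S^*$. Finally I would combine the two: having both $B^{\infty}(\rho_S) \prec \rho_S^*$ and $B^{\infty}(\rho_S) \succ \rho_S^*$ forces every ordered partial sum of the two eigenvalue vectors to coincide, hence $\vec{B}^{\infty}(\rho_S) = \vec{\rho}_S^*$. Because the outer passification $\mathcal{P}$ in the definition of $B$ guarantees that $B^{\infty}(\rho_S)$ is diagonal in the energy eigenbasis with decreasingly ordered eigenvalues, and $\rho_S^*$ is of the same form, equality of eigenvalue vectors promotes to equality of operators, $B^{\infty}(\rho_S) = \rho_S^*$.

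I do not expect a serious obstacle, as the heavy lifting is already contained in the preceding lemmas; the only points requiring care are the passage of majorization to the limit (immediate from continuity of the partial sums) and the conversion of mutual majorization into operator equality (which relies on both states being passive, i.e.\ diagonal with decreasingly ordered eigenvalues).
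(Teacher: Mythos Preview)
Your proposal is correct and follows essentially the same approach as the paper: establish $B^{\infty}(\rho_S) \prec \rho_S^*$ via Theorem~\ref{thm:unibound}/Claim~\ref{theorem:coherentmajorization}, establish $B^{\infty}(\rho_S) \succ \rho_S^*$ via Lemmas~\ref{lemma:fixpointcohmax} and~\ref{lemma:fixpointmajmaxcoh}, and conclude equality. The only cosmetic difference is that the paper spells out the final equality step via an explicit entrywise argument (using the fixed-point ratio condition together with normalization) rather than directly invoking antisymmetry of majorization on passive states as you do.
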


\begin{proof}

We have proven in Lemma \ref{lemma:cohmaxconv} that the coherent max-swap protocol converges to a point that we denote by $B^{\infty}(\rho_S)$. From Lemma \ref{lemma:fixpointcohmax} and Lemma \ref{lemma:fixpointmajmaxcoh} we furthermore know that this point is a fixed point of the coherent max-swap protocol and that it majorises $\rho_S^*$. However from Theorem \ref{thm:unibound} of the main text we know that the state of the system under the repeated application of any coherent operations must be majorised by $\rho_S^*$.

Therefore, the only state that the coherent max-swap protocol can converge to is $\rho^*$ itself.

 Indeed since \(B^{\infty}(\rho_S) \prec \rho_S^*\), we have \([B^{\infty}(\rho_S)]_0 \leq [\rho_S^*]_0\). Furthermore, as $B^{\infty} (\rho_S)$ is a fixed point of the coherent max-swap protocol
\begin{align} \label{equ:maxratio}
    \frac{ [\vec{B}^{\infty} (\rho_S)]_{i} }{ [\vec{B}^{\infty} (\rho_S)]_{i-1} } \leq \frac{ [\vec{\tau}_M]_{d_M-1} }{ [\vec{\tau}_M]_0 } = g = \frac{ [\vec{\rho}^*_S]_{i} }{ [\vec{\rho}^*_S]_{i-1} }
\end{align} 
 holds true, from which follows that
\begin{align}\label{eqs:lambdasmallerp}
    [\vec{B}^{\infty} (\rho_S)]_k =[\vec{B}^{\infty} (\rho_S)]_0 \left( \frac{[\vec{B}^{\infty} (\rho_S)]_1}{[\vec{B}^{\infty} (\rho_S)]_0} \frac{[\vec{B}^{\infty} (\rho_S)]_2}{[\vec{B}^{\infty} (\rho_S)]_1} ... \frac{[\vec{B}^{\infty} (\rho_S)]_k}{[\vec{B}^{\infty} (\rho_S)]_{k-1}} \right) \quad \leq \quad  [\vec{\rho}^*_S]_0 \left( e^{-\beta_R \mathcal{E}_{max}} \right)^k = [\vec{\rho}^*_S]_k \quad \forall k.
\end{align}

As both $B^{\infty}(\rho_S)$ and $\rho_S^*$ are normalised states, the sum of their eigenvalues are both equal to $1$, which is, using Eq.~\ref{eqs:lambdasmallerp},  only possible if $[\vec{B}^{\infty} (\rho_S)]_i = [\vec{\rho}^*_S]_i$ for all $i$, i.e. if and only if \(B^{\infty}(\rho_S)=\rho_S^*\).

\end{proof}
Using this last result we can finally prove that the optimal coherent protocol also converges to the desired state.
\begin{lemma}
If the initial state of the system \(\rho_S\) satisfies $\rho_S \prec \rho_S^*$ then the convergent point of the optimal protocol is $\rho_S^*$.
\end{lemma}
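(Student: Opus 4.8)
The plan is to sandwich the limit $A^{\infty}(\rho_S)$ between $\rho_S^*$ from both sides in the majorization order, and then invoke antisymmetry of majorization on passive states. The key observation that makes this work is that each application of the coherent max-swap map $B$ is itself a single allowed coherent operation (as noted in the definition of the protocol, where it is expressed as one joint unitary on system and machine). This lets me feed the max-swap protocol into Lemma \ref{lemma:optimalisoptimal} as the particular choice $C_i = B$ for every $i$.

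First I would apply Lemma \ref{lemma:optimalisoptimal} with $C_i = B$ for all $i$ to obtain $A^{n}(\rho_S) \succ B^{n}(\rho_S)$ for every $n$. By Lemma \ref{lemma:cohoptconv} the left-hand side converges to $A^{\infty}(\rho_S)$, while by Lemma \ref{lemma:cohmaxrhostar}, using the hypothesis $\rho_S \prec \rho_S^*$, the right-hand side converges to $\rho_S^*$. Since majorization is expressed through the partial sums of the ordered eigenvalue vectors, and these partial sums are continuous functions of the eigenvalues, each weak inequality $\sum_{i=0}^{k-1}[\vec{A}^n(\rho_S)]_i \geq \sum_{i=0}^{k-1}[\vec{B}^n(\rho_S)]_i$ passes to the limit $n \to \infty$, yielding $A^{\infty}(\rho_S) \succ \rho_S^*$.

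On the other side, since $\rho_S \prec \rho_S^*$ and the optimal protocol is built purely from coherent operations, Theorem \ref{thm:unibound} guarantees that every iterate $A^n(\rho_S)$ remains majorized by $\rho_S^*$; passing again to the limit gives $A^{\infty}(\rho_S) \prec \rho_S^*$. Combining the two bounds yields $\rho_S^* \prec A^{\infty}(\rho_S) \prec \rho_S^*$, and because both states are passive (diagonal in the energy eigenbasis with eigenvalues in decreasing order), mutual majorization forces equality of their eigenvalue vectors, hence $A^{\infty}(\rho_S) = \rho_S^*$.

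The only genuinely delicate point is the continuity step in the second paragraph, namely that the per-$n$ majorization relation survives the limit. I expect this to be routine rather than a real obstacle: we work in finite dimension, where convergence of the states implies convergence of each ordered partial sum, so the non-strict inequalities defining majorization are preserved in the limit. Everything else is an immediate assembly of the previously established lemmas, with Lemma \ref{lemma:optimalisoptimal} doing the essential work of transferring the already-proven convergence of the simpler max-swap protocol to the optimal one.
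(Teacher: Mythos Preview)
Your proposal is correct and follows essentially the same route as the paper: both arguments use Lemma~\ref{lemma:optimalisoptimal} (with $C_i=B$) together with Lemma~\ref{lemma:cohmaxrhostar} to get $A^{\infty}(\rho_S)\succ\rho_S^*$, and Theorem~\ref{thm:unibound} to get the reverse majorization, then conclude by the sandwich. The paper phrases the sandwich directly in terms of partial sums rather than invoking antisymmetry of majorization on passive states, but this is purely cosmetic; your continuity remark is exactly the implicit step the paper makes when passing the partial-sum inequalities to the limit.
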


\begin{proof}
From Theorem 1 and Lemma \ref{lemma:cohoptconv} we know that 
\begin{equation}
    \sum_{i=0}^{k-1} [A^{\infty}(\rho_S)]_i \leq \sum_{i=0}^{k-1} [\rho_S^*]_i, \quad \forall k= 1,\dots, d_S.
\end{equation}

From lemma \ref{lemma:optimalisoptimal} we also have that
\begin{equation}
    \sum_{i=0}^{k-1} [B^{n}(\rho_S)]_i \leq  \sum_{i=0}^{k-1} [A^{n}(\rho_S)]_i , \quad \forall k= 1,\dots, d_S, \forall n \in \mathds{N}.
\end{equation}
And so

\begin{equation}
     \sum_{i=0}^{k-1} [\rho_S^*]_i= \sum_{i=0}^{k-1} [B^{\infty}(\rho_S)]_i \leq    \sum_{i=0}^{k-1} [A^{\infty}(\rho_S)]_i= \sum_{i=0}^{k-1} [\rho_S^*]_i , \quad \forall k= 1,\dots, d_S.
\end{equation}

Therefore for all $k=1,\dots, d_S$ we have $ \sum_{i=0}^{k-1} [A^{\infty}(\rho_S)]_i=\sum_{i=0}^{k-1} [\rho_S^*]$ and 
\begin{align}
    A^{\infty}(\rho_S)&= \sum_{k=0}^{d_S-1} \left( \sum_{i=0}^{k+1} [A^{\infty}(\rho_S)]_i - \sum_{j=0}^{k} [A^{\infty}(\rho_S)]_j\right) \ket{E_k} \bra{E_k}\\
   & =\sum_{k=0}^{d_S-1} \left( \sum_{i=0}^{k+1} [\rho_S^*]_i - \sum_{j=0}^{k} [\rho_S^*]_j\right) \ket{E_k} \bra{E_k} \\
   &= \rho_S^*.
\end{align}

\end{proof}

\section{Incoherent cooling bound}
Here we provide a proof for the attainability of the cooling bound of Theorem 3 (in the case of incoherent cooling operations). We shall use the proof from the section ``Coherent attainability'' of the Appendix, since many of the steps are the same. The proof will go as follows. We shall
\begin{itemize}
	\item define an incoherent version of the \textit{max-swap protocol}, that involves extending the machine via the addition of a maximum of $d_S-1$ qubits, or alternatively, a single $d_S$ dimensional system.
	\item show, following the same arguments as in section ``Coherent attainability'' of the Appendix, that the repeated application of the incoherent max-swap on the target converges to a state that saturates the bound in Theorem 3, for every single qubit subspace of the target that is comprised of successive energy eigenstates.
\end{itemize}

We begin by defining the extended machine more precisely.

\begin{definition}(extended machine)
Given the machine M with hamiltonian $H_M= \sum_{i=0}^{d_M-1} \mathcal{E}_i \ket{\mathcal{E}_i} \bra{\mathcal{E}_i}$ and the system S with hamiltonian $H_S= \sum_{i=0}^{d_S-1} E_i \ket{E_i} \bra{E_i}$ we define the extended machine $\tilde{M}$ by appending $d_S-1$ qubits of gap
\begin{equation}
    \mathcal{E}_{\text{max}}-(E_i-E_{i-1}), \quad i-1, \dots, d_S-1
\end{equation}
respectively to the machine M. Denoting the ground state population and excited state population of the $i^{\text{th}}$ qubit by $F_i^0$ and $F_i^1$ respectively, the Hamiltonian of the extended machine is therefore given by
\begin{equation}
    H_{\tilde{M}}= H_M \otimes \mathds{1}_{\bar{M}} + \sum_{i=0}^{d_S-1} H_i \otimes \mathds{1}_{\bar{i}},
\end{equation}
with $H_i = F_i^0 \ket{F_i^0} \bra{F_i^0} + F_i^1 \ket{F_i^1} \bra{F_i^1}$.
\end{definition}

Without loss of generality we will set $F_i^0=0$ for all $i=1,\dots,d_S-1$ such that
\begin{equation}
    F_i^1=\mathcal{E}_{\text{max}}-(E_i-E_{i-1}).
\end{equation}

The extension is coupled to the hot bath for every cycle of the incoherent paradigm, and thus each qubit $i$ is in a thermal state at $T_H$, that we denote by $\tau_i^H$, prior to the application of the joint unitary operation between the target and the extended machine.
The state of the extended machine at the beginning of each cycle is therefore given by
\begin{equation}
    \rho_{\tilde{M}}=\rho_M \otimes \tau_1^H \otimes \dots \otimes \tau_{d_S-1}^H.
\end{equation}

The incoherent max-swap protocol is then defined as follows

\begin{definition}[Incoherent max-swap protocol]

Given a system state $\rho_S$ and the extended machine $\tilde{M}$, let $\bar{k}$ be the index $i \in \{1,\dots,d_S-1\}$ for which $\tilde{\Delta}_i := \bra{E_i}\rho_S \ket{E_i} [\vec{\tau}_M]_0 [\vec{\tau}_i^H]_1 - \bra{E_{i-1}}\rho_S \ket{E_{i-1}} [\vec{\tau}_M]_{d_M-1} [\vec{\tau}_i^H]_0$ is the greatest if there exists a positive $\tilde{\Delta}_i$, else let $\bar{k}=0$. That is
\begin{equation}
\bar{k}:= \begin{cases}
\argmax\limits_{i=1,\dots,d_S-1} \tilde{\Delta}_i &, \text{if} \max_i \tilde{\Delta}_i > 0\\
0&, \text{else}.
    \end{cases}
\end{equation}
Let $\tilde{U}_0=\mathds{1}_{S\tilde{M}}$, and for $i=1,\dots, d_S-1$ let $\tilde{U}_i$ be defined as follows.
\begin{equation}
\begin{aligned}
    \tilde{U}_i = \mathds{1}_{S\tilde{M}} &- \ket{E_{i-1} \mathcal{E}_{\text{max}} F_k^0} \bra{E_{i-1} \mathcal{E}_{\text{max}} F_k^1}\otimes \mathds{1}_{\bar{SMk}} - \ket{E_{i} \mathcal{E}_{0} F_k^1} \bra{E_{i} \mathcal{E}_{0}F_k^1}\otimes \mathds{1}_{\bar{SMk}}\\
    &  + \ket{E_{i-1} \mathcal{E}_{\text{max}} F_k^0} \bra{E_{i} \mathcal{E}_{0} F_k^1}\otimes \mathds{1}_{\bar{SMk}} + \ket{E_{i} \mathcal{E}_{0} F_k^1} \bra{E_{i-1} \mathcal{E}_{\text{max}} F_k^0}\otimes \mathds{1}_{\bar{SMk}}.
    \end{aligned}
\end{equation}

The incoherent max-swap protocol is then defined as applying $\tilde{B}$ to the system state in each step, where
\begin{equation}
    \rho_S \mapsto \tilde{B}(\rho_S) :=  \text{Tr}_{\tilde{M}} [ \tilde{U}_{\bar{k}} \rho_S \otimes \rho_{\tilde{M}} \tilde{U}_{\bar{k}}^{\dagger}].
\end{equation}

The above unitary corresponds to swapping the pair of levels $\{\bar{k}-1,\bar{k}\}$ of the target with the maximum energy gap of the machine and the particular qubit subspace in the extension that makes the unitary an energy preserving swap between degenerate states. Indeed
\begin{equation}
    E_{\bar{k}-1} + \mathcal{E}_{\text{max}}= E_{\bar{k}} + F_{\bar{k}}^1.
\end{equation}

\end{definition}

Note that in the incoherent max-swap protocol one cannot apply $\mathcal{P}$ (the unitary that makes the state passive) since the unitary implementing $\mathcal{P}$ may not be energy preserving. As a consequence, we work with the partial sum of the population of the k lowest excited states instead of the partial sum of the k greatest population but does not change the idea of any of the proof as well as any of the results. One can also remove $\mathcal{P}$ in the definition of the coherent max-swap protocol and all the same results would hold. It is just more efficient to reorder the populations in the right order if one can and is the reason we opted for this alternative in the definition of the coherent max-swap protocol. 

\begin{lemma}
The incoherent max-swap protocol converges.
\end{lemma}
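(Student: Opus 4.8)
The plan is to mirror exactly the structure used for the coherent max-swap protocol in Lemma \ref{lemma:cohmaxconv}, since the incoherent max-swap protocol is built to reproduce the same majorization behaviour on the target. The essential observation is that each energy-preserving unitary $\tilde{U}_i$ is, with respect to the target, a swap between the two successive levels $\{i-1,i\}$: it moves population from level $i$ up to level $i-1$ (the lower-energy, more populated level) precisely by the amount $\tilde{\Delta}_{\bar{k}}$, because the added qubit subspace only bridges the energy gap and is traced out afterwards. The key quantity is therefore the change in the partial sums of the target populations induced by applying $\tilde{B}$.

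First I would define, for each $k \in \{1,\dots,d_S-1\}$, the partial sum $\tilde{\Sigma}_k^{(n)}$ of the populations of the $k$ lowest excited states of $\rho_S^{(n)} := \tilde{B}^n(\rho_S)$, following the remark that in the incoherent case we track partial sums of the lowest levels rather than of the largest eigenvalues (since $\mathcal{P}$ is unavailable). Second, I would argue that by the choice of $\bar{k}$ (the index maximising a strictly positive $\tilde{\Delta}_i$, and $\bar{k}=0$ when no positive $\tilde{\Delta}_i$ exists), every application of $\tilde{B}$ can only increase $\tilde{\Sigma}_k^{(n)}$ for each fixed $k$, so that $(\tilde{\Sigma}_k^{(n)})_{n\in\mathbb{N}}$ is a monotonically nondecreasing sequence. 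Third, since each such sequence is bounded above by $1$, the monotone convergence theorem gives convergence of every partial sum to a limit $\tilde{\Sigma}_k^{(\infty)}$, and the protocol therefore converges to the state
\begin{equation}
 \rho_S^{(\infty)} = \sum_{k=0}^{d_S-1} \left(\tilde{\Sigma}_{k+1}^{(\infty)}-\tilde{\Sigma}_{k}^{(\infty)}\right) \ket{E_k} \bra{E_k}.
\end{equation}

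The main obstacle I anticipate is justifying that a single swap $\tilde{U}_{\bar{k}}$ affects only the one partial sum at the boundary $\bar{k}$ while leaving all other partial sums nondecreasing, given that in the incoherent setting we are summing over a \emph{fixed} energy ordering rather than over sorted eigenvalues. One must check carefully that moving population from level $\bar{k}$ into level $\bar{k}-1$ raises $\tilde{\Sigma}_k$ for $k \ge \bar{k}$ and leaves $\tilde{\Sigma}_k$ unchanged for $k < \bar{k}$, so that no partial sum ever decreases; this is what replaces the clean ``majorization increases'' argument available in the coherent case where $\mathcal{P}$ keeps the eigenvalues sorted. Once this monotonicity of all the fixed-index partial sums is established, the convergence argument is identical to that of Lemma \ref{lemma:cohmaxconv}, and I would simply cite that lemma and the accompanying remark on working with lowest-level partial sums rather than reproduce the bookkeeping in full.
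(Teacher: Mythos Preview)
Your proposal is correct and follows essentially the same approach as the paper: define the partial sums of the populations of the lowest $k$ energy levels (the paper calls these $D_k^{(n)}$), observe that a single application of $\tilde{U}_{\bar{k}}$ can only increase these sums, and conclude convergence via the monotone convergence theorem. One small imprecision: moving population from level $\bar{k}$ to level $\bar{k}-1$ strictly increases only $\tilde{\Sigma}_{\bar{k}}$ and leaves all other partial sums \emph{unchanged} (not raised for $k>\bar{k}$), but since the needed conclusion is merely nondecreasing, this does not affect the argument.
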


\begin{proof}

Let $D^{(n)}_k$ be the sum of the populations of the first k excited states of $\tilde{B}^{n}(\rho_S)$, that is
\begin{equation}
    D^{(n)}_k=\sum_{i=0}^{k-1} \bra{E_i} \tilde{B}^{n}(\rho_S) \ket{E_i}.
\end{equation}

Analogously as for the coherent max-swap protocol, it follows from the definition of the incoherent max-swap protocol that for each $k=1,\dots,d_S$, $(D_k^{(n)})_{n \in \mathds{N}}$ is monotonically increasing. As these sequences are each bounded by 1, by the monotone convergence theorem, they respectively converge to $D_k^{(\infty)}$. The protocol thus converges to the state
\begin{equation}
    \tilde{B}^{\infty}(\rho_S) = \sum_{k=0}^{d_S-1} (D_{k+1}^{(\infty)}-D_{k}^{(\infty)}) \ket{E_k} \bra{E_k}.
\end{equation}
\end{proof}

\begin{lemma}
For any initial state $\rho_S$, $\tilde{B}^{\infty}(\rho_S)$ is a fixed point of the coherent max-swap protocol.
\end{lemma}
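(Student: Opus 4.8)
The plan is to mirror the contradiction argument of Lemma~\ref{lemma:fixpointcohmax} almost verbatim, the only structural change being that the incoherent protocol never applies a passivating map, so the monotone quantities to track are the fixed-ordering partial sums $D_k^{(n)} = \sum_{i=0}^{k-1} p_i^{(n)}$ of the populations $p_i^{(n)} := \bra{E_i}\tilde{B}^n(\rho_S)\ket{E_i}$, rather than decreasing-order partial sums of eigenvalues. Since each $\tilde{U}_i$ merely transfers weight from the energy level $E_i$ down to $E_{i-1}$, the previous lemma already gives that every $(D_k^{(n)})_n$ converges monotonically from below to $D_k^{(\infty)}$, and hence that each population $p_i^{(n)}$ converges to $p_i^{(\infty)}$.

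First I would assume, for contradiction, that $\tilde{B}^{\infty}(\rho_S)$ is not a fixed point. Then applying $\tilde{B}$ to it selects some index $\bar{k}_{\infty}\in\{1,\dots,d_S-1\}$ with $\tilde{\Delta}^{(\infty)}_{\bar{k}_{\infty}} = \delta > 0$, i.e. the swap $\tilde{U}_{\bar{k}_{\infty}}$ would move a positive population $\delta$ from $E_{\bar{k}_{\infty}}$ to $E_{\bar{k}_{\infty}-1}$. Setting $c := [\vec{\tau}_M]_0[\vec{\tau}^H_{\bar{k}_{\infty}}]_1 + [\vec{\tau}_M]_{d_M-1}[\vec{\tau}^H_{\bar{k}_{\infty}}]_0$ and $\epsilon := \delta/(d_S+c)$, I would invoke convergence of the populations to find an $n$ with $|p_i^{(n)}-p_i^{(\infty)}| < \epsilon$ for all $i$.

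Next I would estimate the population $\delta' = p^{(n)}_{\bar{k}_{\infty}}[\vec{\tau}_M]_0[\vec{\tau}^H_{\bar{k}_{\infty}}]_1 - p^{(n)}_{\bar{k}_{\infty}-1}[\vec{\tau}_M]_{d_M-1}[\vec{\tau}^H_{\bar{k}_{\infty}}]_0$ that the same swap moves when applied at step $n$. Using $p^{(n)}_{\bar{k}_{\infty}} > p^{(\infty)}_{\bar{k}_{\infty}} - \epsilon$ and $-p^{(n)}_{\bar{k}_{\infty}-1} > -(p^{(\infty)}_{\bar{k}_{\infty}-1}+\epsilon)$ gives $\delta' \geq \delta - \epsilon c = d_S\epsilon$, where the two hot-qubit factors $[\vec{\tau}^H_{\bar{k}_{\infty}}]_{0,1}$ take over the role played by the bare machine populations in the coherent proof. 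Since $\bar{k}_n$ maximises $\tilde{\Delta}_i$ at step $n$, we have $\tilde{\Delta}_{\bar{k}_n} \geq \delta' \geq d_S\epsilon$, so $D^{(n+1)}_{\bar{k}_n} = D^{(n)}_{\bar{k}_n} + \tilde{\Delta}_{\bar{k}_n} \geq D^{(n)}_{\bar{k}_n} + d_S\epsilon$. Combining this with $|D_k^{(n)}-D_k^{(\infty)}| \leq \sum_{i<k}|p_i^{(n)}-p_i^{(\infty)}| < d_S\epsilon$ yields $D^{(n+1)}_{\bar{k}_n} > D^{(\infty)}_{\bar{k}_n}$, contradicting monotone convergence of $(D^{(n)}_{\bar{k}_n})_n$ from below, and forcing $\tilde{B}^{\infty}(\rho_S)$ to be a fixed point.

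The only genuinely new bookkeeping relative to Lemma~\ref{lemma:fixpointcohmax}, and thus the step I expect to need most care, is carrying the two extra hot-qubit population factors through $\tilde{\Delta}_i$, $\delta'$ and $c$ while verifying that the choice $\epsilon = \delta/(d_S+c)$ still produces the clean cancellation $\delta - \epsilon c = d_S\epsilon$. The one conceptual point to get right is the absence of the passivating map $\mathcal{P}$: because we cannot reorder populations energy-preservingly, the correct monotone is the population of the lowest $k$ energy levels, and the argument closes precisely because every admissible swap only moves weight downward in energy.
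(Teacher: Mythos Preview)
Your proposal is correct and follows essentially the same route as the paper's own proof: both mirror Lemma~\ref{lemma:fixpointcohmax} by replacing $\Delta_i$ with $\tilde{\Delta}_i$, switching from ordered-eigenvalue partial sums to the energy-ordered partial sums $D_k^{(n)}$ (since no passivating map is available), redefining $\epsilon$ to absorb the extra hot-qubit factors $[\vec{\tau}^H_{\bar{k}_\infty}]_{0,1}$, and deriving the same contradiction to the monotonicity of $(D_k^{(n)})_n$. Your constant $c$ and the identity $\delta-\epsilon c=d_S\epsilon$ reproduce exactly the paper's choice of $\epsilon$; the paper simply states the substitutions without spelling out the intermediate inequalities as you do.
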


\begin{proof}
The proof is completely analogous to that of Lemma \ref{lemma:fixpointcohmax}. One only has to systematically replace $\Delta$ by $\tilde{\Delta}$, and $[\vec{B}^n(\rho_S)]_i$ by $\bra{E_i} \tilde{B}^n(\rho_S) \ket{E_i}$ as well as define $\epsilon$ and $\delta'$ as
\begin{align}
\epsilon &:= \frac{\delta}{d_S+ [\vec{\tau}_M]_0 [\vec{\tau}_{\bar{k}_{\infty}}^H]_1+ [\vec{\tau}_M]_{d_M-1} [\vec{\tau}_{\bar{k}_{\infty}}^H]_0}\\
\delta'&:=\bra{E_{\bar{k}_{\infty}}} \tilde{B}^n(\rho_S)\ket{E_{\bar{k}_{\infty}}} [\vec{\tau}_M]_0 [\vec{\tau}_{\bar{k}_{\infty}}^H]_1+ \bra{E_{\bar{k}_{\infty}-1}} \tilde{B}^n(\rho_S)\ket{E_{\bar{k}_{\infty}-1}}  [\vec{\tau}_M]_{d_M-1} [\vec{\tau}_{\bar{k}_{\infty}}^H]_0.
\end{align}

The contradiction concluding the proof is then to the fact that $(D_k^{(n)})_{n \in \mathbb{N}}$ is monotonically increasing.

\end{proof}

\begin{lemma}
If $\rho_S \prec \rho_S^*$ then $\tilde{B}^{\infty} (\rho_S) \succ \rho_S^*$.
\end{lemma}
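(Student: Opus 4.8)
The plan is to reduce this statement to the coherent fixed-point argument of Lemma~\ref{lemma:fixpointmajmaxcoh}. By the preceding lemma, $\tilde{B}^{\infty}(\rho_S)$ is a fixed point of the incoherent max-swap protocol, and the convergence lemma guarantees that it is diagonal in the energy eigenbasis, so I may write $\tilde{B}^{\infty}(\rho_S) = \sum_k \lambda_k \ket{E_k}\bra{E_k}$ with $\lambda_k = \bra{E_k}\tilde{B}^{\infty}(\rho_S)\ket{E_k}$. The whole argument hinges on showing that, in the limit $T_H \to \infty$, the fixed-point condition of the incoherent protocol collapses to exactly the ratio condition of the coherent case, after which the remaining manipulations carry over unchanged.

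First I would write out the fixed-point condition explicitly. Being a fixed point means $\bar{k}=0$, i.e. $\tilde{\Delta}_i \leq 0$ for every $i$, which rearranges to
\begin{equation}
\frac{\lambda_i}{\lambda_{i-1}} \leq \frac{[\vec{\tau}_M]_{d_M-1}\,[\vec{\tau}_i^H]_0}{[\vec{\tau}_M]_0\,[\vec{\tau}_i^H]_1}.
\end{equation}
Next I would take $T_H \to \infty$, in which each appended qubit becomes maximally mixed so that $[\vec{\tau}_i^H]_0/[\vec{\tau}_i^H]_1 \to 1$, leaving
\begin{equation}
\frac{\lambda_i}{\lambda_{i-1}} \leq \frac{[\vec{\tau}_M]_{d_M-1}}{[\vec{\tau}_M]_0} = e^{-\beta_R \mathcal{E}_{\text{max}}} = g = \frac{[\vec{\rho}_S^*]_i}{[\vec{\rho}_S^*]_{i-1}}.
\end{equation}
Since $g<1$, this already forces $\lambda_i \leq \lambda_{i-1}$, so the converged fixed point is passive and the $\lambda_i$ are automatically the eigenvalues in decreasing order. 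With the bound on successive ratios now identical to the one governing coherent fixed points, I would invoke verbatim the chain of inequalities in the proof of Lemma~\ref{lemma:fixpointmajmaxcoh}: iterate the ratio to obtain $\lambda_i/\lambda_{i-m} \leq [\vec{\rho}_S^*]_i/[\vec{\rho}_S^*]_{i-m}$, bound the ratio of the tail partial sum to the head partial sum, and conclude $\sum_{i=0}^{k-1}\lambda_i \geq \sum_{i=0}^{k-1}[\vec{\rho}_S^*]_i$ for all $k$, which is precisely $\tilde{B}^{\infty}(\rho_S) \succ \rho_S^*$.

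The main obstacle is conceptual rather than computational: because the passivising map $\mathcal{P}$ need not be energy-preserving, the incoherent protocol tracks the diagonal populations $\bra{E_i}\rho\ket{E_i}$ in the fixed energy order rather than the sorted eigenvalues, so I must verify that at the fixed point the two notions coincide. This is exactly what the observation $g<1$ supplies, and it is the single step that licenses the transfer of the coherent majorization argument. I would also note that the hypothesis $\rho_S \prec \rho_S^*$ is not actually consumed in this direction---as in Lemma~\ref{lemma:fixpointmajmaxcoh}, \emph{every} fixed point majorizes $\rho_S^*$---but it is the natural companion to the matching upper bound of Theorem~\ref{thm:unibound} (a hot bath cannot outperform coherent control), the two bounds together pinning the converged state to $\rho_S^*$ itself.
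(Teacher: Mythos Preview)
Your proposal is correct and follows essentially the same route as the paper: both use that the converged state is a fixed point, rewrite the fixed-point condition $\tilde{\Delta}_i\le 0$ as a ratio inequality, take $T_H\to\infty$ so the hot-qubit factor drops out, and then defer to the argument of Lemma~\ref{lemma:fixpointmajmaxcoh}. Your additional remark that $g<1$ forces passivity (so the energy-ordered diagonal populations coincide with the sorted eigenvalues) is a clarification the paper leaves implicit, and your observation that the hypothesis $\rho_S\prec\rho_S^*$ is not actually consumed in this direction is also accurate.
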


\begin{proof}
The fixed points of the incoherent max-swap protocol are those states for which all of the $\tilde{\Delta}_i$ are non-positive. Since the converging point of the protocol $\tilde{B}^{\infty}(\rho_S)$ is a fixed point this means that
\begin{align}
	\frac{\bra{E_i} \tilde{B}^{\infty}(\rho_S) \ket{E_i}}{\bra{E_{i-1}} \tilde{B}^{\infty}(\rho_S) \ket{E_{i-1}}} \leq \frac{ [\vec{\tau}_M]_{d_M-1} }{ [\vec{\tau}_M]_0 } \frac{[\vec{\tau}_{i}^H]_0}{[\vec{\tau}_{i}^H]_1}.
\end{align}

In the case that the temperature of the hot bath is infinite, this is equivalent to \eqref{equ:maxratio}. Thus the protocol must converge to a state with ratios smaller or equal to that of $\rho^*_S$, which from the proof of Lemma \ref{lemma:fixpointmajmaxcoh} means that $\tilde{B}^{\infty}(\rho_S)$ majorizes $\rho^*_S$. 

\end{proof}

\section{convergence rate}

We would like to calculate the convergence rate of the coherent max-swap protocol for an illustrative scenario. We do this for a target qubit system and a machine consisting of $n$ identical qubits, $n$ being a natural number. To focus on the rate of convergence, we keep the final cooling bound independent of $n$, by keeping the largest energy gap of the full machine fixed at $\mathcal{E}_\text{max}$. For a given $n$, each machine qubit is therefore of gap $\frac{\mathcal{E}_{\text{max}}}{n}$. Each step of the algorithm consists in performing a swap between the system qubit and the $\mathcal{E}_{\text{max}}$ virtual qubit of the machine. That corresponds to performing the unitary swapping the energy levels $\ket{0 \mathcal{E}_{\text{max}}}_{SM}$ and $\ket{E_S 0}_{SM}$. Upon performing the swap, the ground state population of the system is changed from $p_0$ to

\begin{equation}
p_0'= p_0 (1-q_{1 \dots 1}) + p_1 q_{0 \dots 0},
\end{equation}

where $p_0$ and $p_1$ are the ground and excited state population of the system and $q_i$, i written in binary, is the $i^{\text{th}}$ excited state population of the machine, when thermalised to $\beta_R$. 

One can rewrite $p_0'$ as

\begin{equation}
p_0'= (p_0 - r_V) (1-N_n)+r_V,
\end{equation}
with $N_n:= q_{0 \dots 0}+q_{1 \dots 1}$ the norm of the $\mathcal{E}_{\text{max}}$ virtual qubit machine and 
\begin{equation}
r_V:= \frac{q_{0 \dots 0}}{N_n}= \frac{1}{1+e^{-\beta_R \mathcal{E}_{\text{max}}}}
\end{equation}
 its normalised ground state population. Note that $r_V$ is independent of n, the number of machine qubits. After performing k steps of the algorithm we therefore have

\begin{equation}
p_0^k-r_V=(p_0-r_V) (1-N_n)^k
\end{equation}
and so $\lim_{k \rightarrow \infty} p_0^k= r_V$. The rate of convergence is therefore given by

\begin{equation}
\lim_{k \rightarrow \infty} \frac{p_0^k - r_V}{p_0^{k-1}-r_V}=1-N_n.
\end{equation}

The last equation holds in fact for any $k \in \mathds{N}$, not only in the limit $k \rightarrow \infty$. In any case, calculating the rate of convergence amounts to calculating $N_n$. The closer that $N_n$ is to $1$, the faster the convergence. Conversely, $N_n$ close to zero implies slower convergence.

Since the machine is a tensor of $n$ qubits, we can explicitly calculate the norm to be
\begin{align}
    N_n = q_{0 \dots 0} + q_{1 \dots 1} &= q_0^n + q_1^n \\
    &= \left( \frac{1}{1 + e^{-\beta_R \mathcal{E}_\text{max} / n}}  \right)^n + \left( \frac{e^{-\beta_R \mathcal{E}_\text{max} / n}}{1 + e^{-\beta_R \mathcal{E}_\text{max} / n}} \right)^n \\
    &= \frac{1 + e^{-\beta_R \mathcal{E}_\text{max}}}{ \left( 1 + e^{-\beta_R \mathcal{E}_\text{max} / n} \right)^n }.
\end{align}

This is a decreasing function w.r.t. $n$ and for large $n$ scales as $O(2^{-n})$, more precisely,
\begin{align}
    \lim_{n \rightarrow \infty} \frac{N_n}{2^{-n}} &= 2 \cosh\left( \frac{ \beta_R \mathcal{E}_\text{max}}{2} \right). 
\end{align}

This is consistent with the statement in the main text that a single qubit machine is more effective that a more complex machine if both have the same maximum energy gap, and thus the same cooling bound. Complexity helps if it can increase $\mathcal{E}_\text{max}$ by composition, and can also decrease the work cost of cooling.

\begin{table}[h!] 
\centering
    \begin{tabular}{|c||ccc|ccc|}
    \hline
  {\backslashbox{\# cycles}{size}}& \multicolumn{3}{|c|}{finite}&\multicolumn{3}{|c|}{$\infty$}\\ \hline    control & coh & & inc&coh& & inc\\\hline
  1 &$\Lambda_{coh}^1$&$\succ$&$\Lambda_{inc}^1$&CPTP&$\succ$&TO\\\hline $\infty$&$\Lambda_{coh}^\infty$&$\cong$&$\Lambda_{inc}^\infty$&CPTP&$=$&CPTP\\
  \hline
    \end{tabular}
    \caption{Here we depict the possible maps induced on the target system in the different regimes. For single cycles, finite coherent machines yield target states majorising finite incoherent machines. For infinite machine size the maps tend to thermal operations (TO)\cite{TO} in the incoherent control paradigm if one keeps the part of the machine connected to the hot bath finite and views it as being part of the system. See also \cite{finiteTO1} and \cite{finiteTO2} for finite dimensional versions of TO. For single cycle coherent control considering arbitrary infinite size machines, one can, making use of Stinespring's dilation, implement any CPTP map on our system \cite{cohattain}, demonstrating that most cooling paradigms only make sense when considering finite range interactions. In the limit of infinite cycles, the two control paradigms unify, at least for qubit target systems, by making use of an ancillary qubit in the incoherent machine.}\label{table}
\end{table}

\end{document}